\pgfplotsset{grid style={red}}
\title{Isometry invariant shape recognition of projectively perturbed point clouds by the mergegram extending 0D persistence}\titlerunning{The mergegram of a dendrogram and its stability}
\author{Yury Elkin}{Materials Innovation Factory and Computer Science department, University of Liverpool, UK}{yura.elkin@gmail.com}{}{}
\author{Vitaliy Kurlin}{Materials Innovation Factory and Computer Science department, University of Liverpool, UK}{vitaliy.kurlin@gmail.com}{}{}
\authorrunning{Y.Elkin et. al.}
\keywords{Shape Recognition; Topological Data Analysis; Machine Learning; Computer Vision}
\pgfplotsset{compat=1.7}
\pgfplotsset{grid style={red}}
\theoremstyle{definition}
\newtheorem{dfn}{Definition}[section]
\newtheorem{pro}[dfn]{Problem}
\newtheorem{thm}[dfn]{Theorem}
\newtheorem{exa}[dfn]{Example}
\newtheorem{lem}[dfn]{Lemma}
\newcommand{\R}{\mathbb{R}}
\newcommand{\Z}{\mathbb{Z}}
\newcommand{\PS}{\mathbb{P}}
\newcommand{\V}{\mathbb{V}}
\newcommand{\W}{\mathbb{W}}
\newcommand{\I}{\mathbb{I}}
\newcommand{\id}{\mathrm{id}}
\newcommand{\life}{\mathrm{life}}
\newcommand{\MG}{\mathrm{MG}}
\newcommand{\PD}{\mathrm{PD}}
\newcommand{\HD}{\mathrm{HD}}
\newcommand{\BD}{\mathrm{BD}}
\newcommand{\GH}{\mathrm{GH}}
\newcommand{\ID}{\mathrm{ID}}
\newcommand{\SL}{\mathrm{SL}}
\newcommand{\MST}{\mathrm{MST}}
\newcommand{\birth}{\mathrm{birth}}
\newcommand{\death}{\mathrm{death}}
\newcommand{\de}{\delta}
\newcommand{\De}{\Delta}
\newcommand{\ep}{\epsilon}
\newcommand{\bs}{\hfill $\blacksquare$}
\begin{document}
\maketitle

\begin{abstract}
Rigid shapes should be naturally compared up to rigid motion or isometry, which preserves all inter-point distances.
The same rigid shape can be often represented by noisy point clouds of different sizes.
Hence the isometry shape recognition problem requires methods that are independent of a cloud size.
This paper studies stable-under-noise isometry invariants for the recognition problem stated in the harder form when given clouds can be related by affine or projective transformations.
The first contribution is the stability proof for the invariant mergegram, which completely determines a single-linkage dendrogram in general position.
The second contribution is the experimental demonstration that the mergegram outperforms other invariants in recognizing isometry classes of point clouds extracted from perturbed shapes in images.
\end{abstract}


\section{Introduction: motivations, shape recognition problem and overview of results}
\label{sec:intro}

Real-life objects are often represented by unstructured point clouds obtained by laser range scanning or by selecting salient or feature points in images \cite{pauly2002efficient}.
Point clouds are easy to store and can be used as primitives for visualization \cite{zwicker2002pointshop}.
The above advantages strongly motivate the problem of comparing and classifying unstructured point clouds.
\medskip

Rigid objects are naturally studied up to rigid motion or isometry (including reflections), which is any map that preserves inter-point distances.
The recognition of point clouds of the same number of points is practically solved by the histogram of all pairwise distances, which is a complete isometry invariant in general position \cite{boutin2004reconstructing}.
\medskip

Real shapes are often given in a distorted form because of noisy measurements, when points are perturbed, missed or accidentally added. 
One of the first approaches to recognize nearly identical point clouds $A,B$ of different sizes in the same metric space, for example in $\R^m$, is to use the \emph{Hausdorff} distance  \cite{huttenlocher1993comparing} $\HD(A,B)=\min\ep\geq 0$ such that the first cloud $A$ is covered by $\ep$-balls centered at all points of $B$ and vice versa.  
\medskip

However, we also need to take into account infinitely many potential isometries of the ambient space $\R^m$. 
The exact computation of $\inf_{f} \HD(f(A),B)$ minimized over isometries $f$ of $\R^m$ has a high polynomial complexity already for dimension $m=2$ \cite{chew1997geometric}.
An approximate algorithm is cubic in the number of points for $m=3$ \cite{goodrich1999approximate}.
\medskip

This paper extends the 12-page conference version \cite{elkin2020mergegram}, which introduced the new invariant mergegram but didn't prove its continuity under perturbations.
In addition to the proof of continuity, another contribution is 
Theorem~\ref{thm:mergegram-to-dendrogram} showing how to reconstruct a dendrogram of single-linkage clustering from a mergegram in general position. 
\medskip

The practical novelty is the harder recognition problem including perturbations of isometries within wider classes of affine and projective maps motivated by Computer Vision applications.
Indeed different positions of cameras produce projectively equivalent images of the same rigid shape. 
The new experiments in section~\ref{sec:experiments} extensively compared several approaches on 15000 clouds obtained from real images, see examples in Fig.~\ref{fig:myth_images}.  
\medskip

\newcommand{\scale}{0.13}
\begin{figure}[h!]
    \centering
    \includegraphics[scale = \scale]{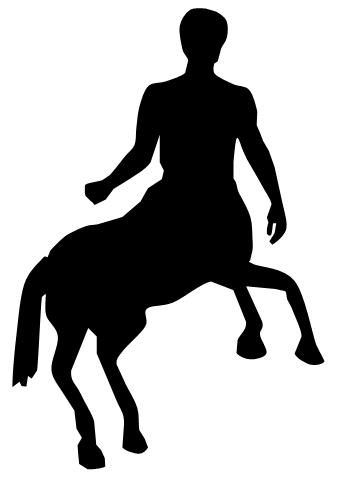}
    \includegraphics[scale = \scale]{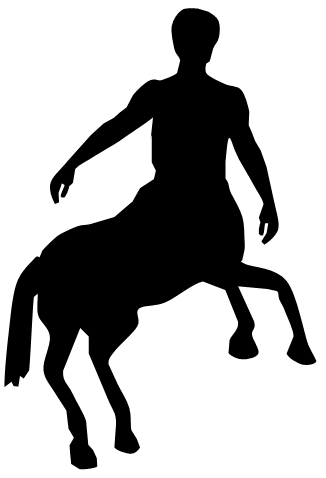}
    \includegraphics[scale = \scale]{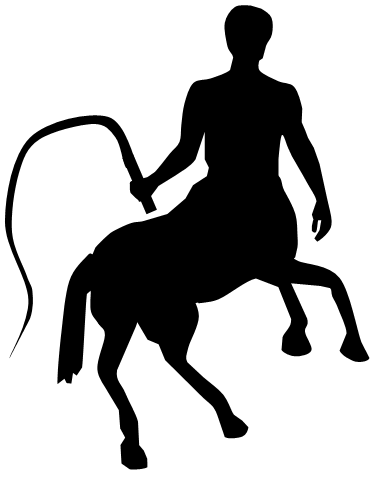}
    \includegraphics[scale = \scale]{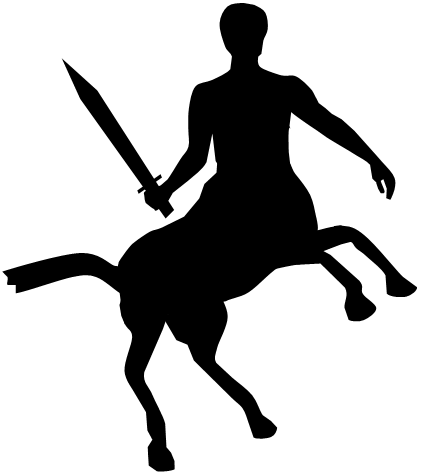}
    \includegraphics[scale = \scale]{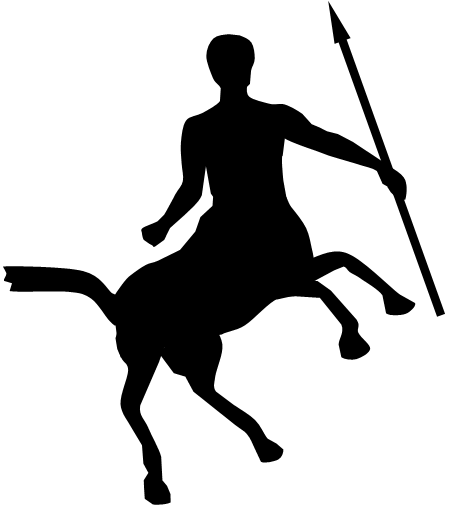}
    \includegraphics[scale = \scale]{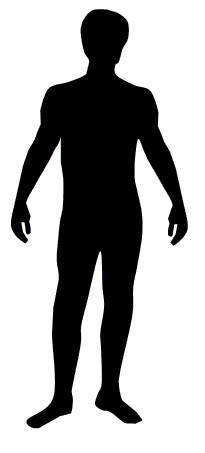}
    \includegraphics[scale = \scale]{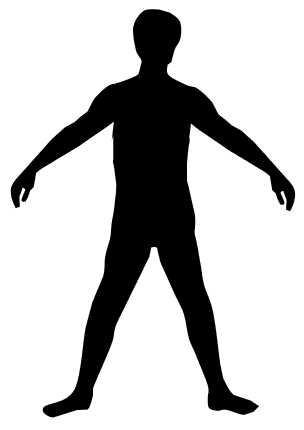}
    \includegraphics[scale = \scale]{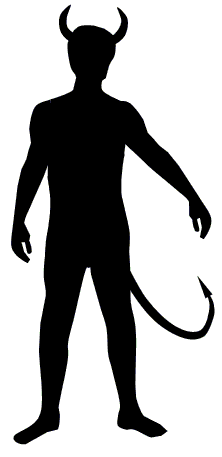}
    \includegraphics[scale = \scale]{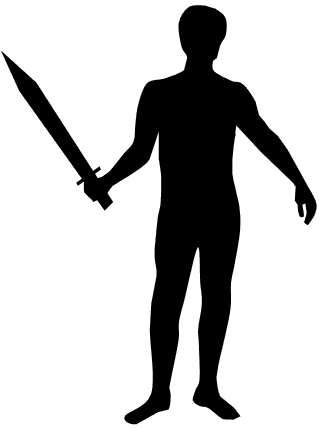}
    \includegraphics[scale = \scale]{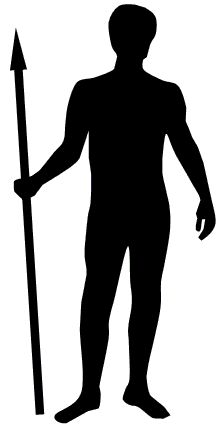}
    \includegraphics[scale = \scale]{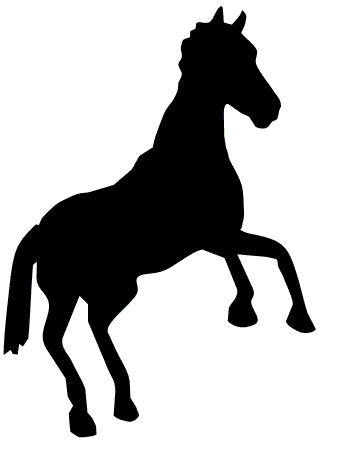}
    \includegraphics[scale = \scale]{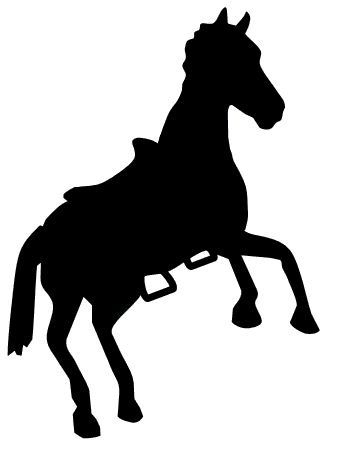}
    \includegraphics[scale = \scale]{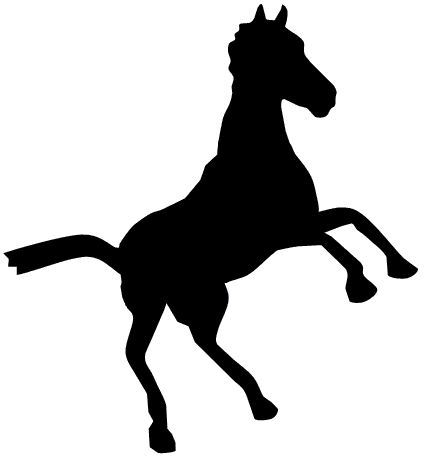}
    \includegraphics[scale = \scale]{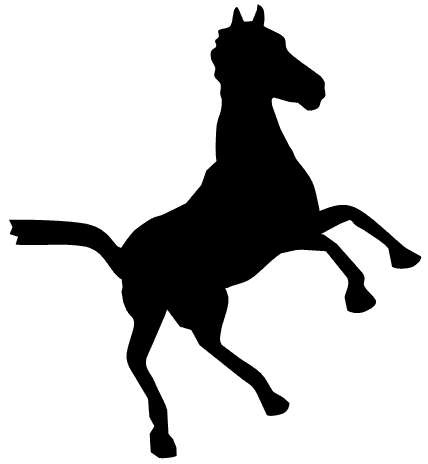}
    \includegraphics[scale = \scale]{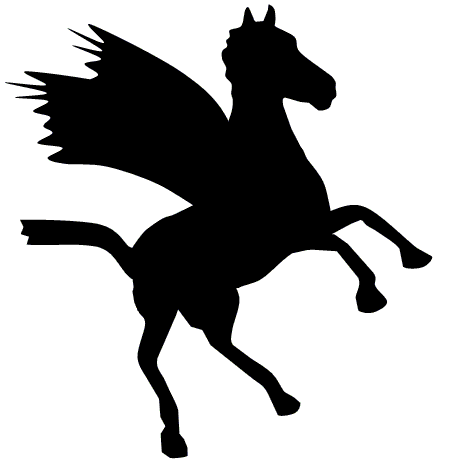}
\caption{Example images from the dataset of mythical creatures, which was introduced in \cite{bronstein2008analysis}.}
    \label{fig:myth_images}
\end{figure}

\begin{pro}[isometry shape recognition under noise]
\label{pro:recognition}
Find an isometry invariant of point clouds in $\R^m$ that is 
(a) independent of a cloud size, (b) provably continuous under perturbations of a cloud, (c) computable in a near linear time in the size of a cloud, and (d) more efficient for recognizing isometry classes of clouds than past invariants.
\bs
\end{pro}

The key contributions are proofs of Theorems~\ref{thm:mergegram-to-dendrogram}, \ref{thm:stability_persistence} and Fig.~\ref{fig:max-layer_affine}, \ref{fig:max-layer_projective}, \ref{fig:image-layer_affine} showing that the mergegram achieves a state-of-the-art recognition on substantially distorted images.

\section{Related work on isometry shape recognition and Topological Data Analysis}
\label{sec:review}

For the isometry classification of clouds consisting of the same number of points, the easiest invariant is the distribution of all pairwise distances, whose completeness (or injectivity) was proved for all point clouds in general (non-singular) position in $\R^m$ \cite{boutin2004reconstructing}.
\medskip

Fixed point clouds $A,B\subset\R^m$ of different sizes can be pairwisely compared by the \emph{Hausdorff} distance \cite{huttenlocher1993comparing}
 $\HD(A,B)=\max\{\sup\limits_{p\in A} d_B(p), \sup\limits_{q\in B} d_A(q)\}$, where $d_B(p)=\inf\limits_{q\in B} d(p,q)$ is the (Euclidean or another) distance from a point $p\in A$ to the cloud $B$.
\medskip

The rigid shape recognition problem for non-fixed clouds $A,B$ is harder because of infinitely many potential isometries that can match $A,B$ exactly or approximately. 
Partial cases of this problem were studied for clouds representing surfaces \cite{elad2001bending} and when two clouds have a given isometric matching of one pair of points \cite{ovsjanikov2010one}.
Shape Google by Ovsjanikov et al. practically extends these ideas to non-rigid shape recognition \cite{ovsjanikov2009shape}.
\medskip

The most general framework for the isometry shape recognition of point cloud data was proposed by M\'emoi and Sapiro \cite{memoli2005theoretical}.
They studied the Gromov-Hausdorff distance $d_{GH}(A,B)=\inf\limits_{f,g,M} \HD(f(A),g(B))$ minimized over all isometric embeddings $f:A\to M$ and $g:B\to M$ of given point clouds into a metric space $M$.
Since the above definitions involve even more minimizations over infinitely many maps and spaces, $\GH$ can be only approximated.
The Farthest Point Sampling (FPS) has a quadratic complexity in the number of points \cite[section~3.6]{memoli2005theoretical} and was successfully tested on small clouds.
\medskip

The proposed invariant mergegram extends the 0-dimensional persistence in the area of Topological Data Analysis (TDA), which grew from the theory of size functions \cite{verri1993use}.
TDA views a point cloud $A\subset\R^m$ not by fixing any distance threshold but across all scales $s$, for example by blurring given points to balls of a variable radius $s$.
The resulting evolution of topological shapes is summarized by a persistence diagram, which is invariant under isometries of $\R^m$.
TDA can be combined with machine learning and statistical tools due to stability under noise, which was first proved by Cohen-Steiner et al. \cite{cohen2007stability} and then extended to a very general form by Chazal et al. \cite{chazal2016structure}.
\medskip

In dimension 0 the \emph{persistence diagram} $\PD(A)$ for distance-based filtrations of a point cloud $A$ consists of the pairs $(0,s)\in\R^2$, where values of $s$ are distance scales at which subsets of $A$ merge by the single-linkage clustering.
These scales equal half-lengths of edges in a \emph{Minimum Spanning Tree} $\MST(A)$.
If distances between all points of $A$ are known, $\MST(A)$ is a connected graph with the vertex set $A$ and a minimal total length.
\medskip

Representing a point cloud $A$ by $\PD(A)$ loses a lot of geometry of $A$, but gains stability under perturbations, which can be expressed in the case of point clouds as $\BD(\PD(A),\PD(B))\leq \HD(A,B)$.
Here the \emph{bottleneck distance} $\BD$ between diagrams is defined as a minimum $\ep\geq 0$ such that all pairs of $\PD(A)$ can be bijectively matched to $\ep$-close points of $\PD(B)$ or to diagonal pairs $(s,s)$, and vice versa.
Here $\ep$-closeness of pairs $(a,c)$ and $(b,d)$ in $\R^2$ is measured in the  distance $L_{\infty}=\max\{|a-b|,|c-d|\}$.
\medskip

The \emph{mergegram} extends $\PD(A)$ to a stronger invariant whose stability under perturbations in the above sense is proved in section~\ref{sec:stability} for the first time. 
The idea of a mergegram is related to the Reeb graph \cite{parsa2013deterministic} or the merge tree \cite{morozov2013interleaving} for the sublevel set filtration of a scalar function. 
The mergegram $\MG$ is defined at a more abstract level for any clustering dendrogram, which can be reconstructed from $\MG$ in general position.
\medskip


Since any persistence diagram and a mergegram are unordered collections of pairs, the experiments in section~\ref{sec:experiments} will use the neural network PersLay \cite{carriere2019perslay} whose output is invariant under permutations of input points by design.
PersLay extends the neural network DeepSet \cite{zaheer2017deep} for unordered sets and introduces new layers to specifically handle persistence diagrams, as well as a new form of representing such permutation-invariant layers.
In other related work deep learning was recently applied to outputs of hierarchical clustering \cite{fu2019learning}, \cite{cirrincione2020gh}, \cite{karim2021deep} and to 0-dimensional persistence \cite{clough2020topological}, \cite{gabrielsson2020topology}. 

\section{Single-linkage clustering and the invariant mergegram of a dendrogram }
\label{sec:mergegram}


\begin{exa}
\label{exa:5-point_line}
Fig.~\ref{fig:5-point_line} illustrates the key concepts before formal Definitions~\ref{dfn:sl_clustering}, \ref{dfn:dendrogram}, \ref{dfn:mergegram} for the point cloud $A = \{0,1,3,7,10\}$ in the real line $\R$.
Imagine that we gradually blur original data points by growing disks of the same radius $s$ around the given points.

\begin{figure}[h!]
\centering
\begin{tikzpicture}[scale = 1.1]
  \draw[->] (-1,0) -- (11,0) node[right]{} ;
   \foreach \x/\xtext in {0, 1, 3, 7, 10}
    \draw[shift={(\x,0)}] (0pt,2pt) -- (0pt,-2pt) node[below] {$\xtext$};   
   \filldraw (0,0) circle (2pt);
   \filldraw (1,0) circle (2pt);
   \filldraw (3,0) circle (2pt);
   \filldraw (7,0) circle (2pt);
   \filldraw (10,0) circle (2pt);
\end{tikzpicture}

\begin{tikzpicture}[thick,scale=0.50, every node/.style={transform shape}][sloped]
\draw[style=help lines,step = 1] (-1,0) grid (10.4,4.4);
\draw [->] (-1,0) -- (-1,5) node[above] {{\large scale $s$}};
\foreach \i in {0,0.5,...,2}{ \node at (-1.5,2*\i) {\i}; }
\node (a) at (0,-0.3) {0};
\node (b) at (1,-0.3) {1};
\node (c) at (3,-0.3) {3};
\node (d) at (7,-0.3) {7};
\node (e) at (10,-0.3) {10};
\node (x) at (5,5) {};
\node (ab) at (0.5,1){};
\node (abc) at (1.5,2){};
\node (de) at (8.5,3){};
\node (all) at (5,4){};
\draw [line width=0.5mm, blue ] (a) |- (ab.center);
\draw [line width=0.5mm, blue ] (b) |- (ab.center);
\draw [line width=0.5mm, blue ] (c) |- (abc.center);
\draw [line width=0.5mm, blue ] (d) |- (de.center);
\draw [line width=0.5mm, blue ] (e) |- (de.center);
\draw [line width=0.5mm, blue ] (ab.center) |- (abc.center);
\draw [line width=0.5mm, blue ] (abc.center) |- (all.center);
\draw [line width=0.5mm, blue ] (de.center) |- (all.center);
\draw [line width=0.5mm, blue ] [->] (all.center) -> (x.center);
\end{tikzpicture}
\hspace*{1mm}
\begin{tikzpicture}[thick,scale=0.9, every node/.style={transform shape}]
  \draw[style=help lines,step = 0.5] (0,0) grid (0.5,2.4);
  \draw[->] (-0.2,0) -- (0.8,0) node[right] {birth};
  \draw[->] (0,-0.2) -- (0,2.4) node[above] {};	
  \draw[-] (0,0) -- (1,1) node[right]{};
  \foreach \x/\xtext in {0.5/0.5}
    \draw[shift={(\x,0)}] (0pt,2pt) -- (0pt,-2pt) node[below] {$\xtext$};
  \foreach \y/\ytext in {0.5/0.5,  1/1, 1.5/1.5, 2.0/2}
    \draw[shift={(0,\y)}] (2pt,0pt) -- (-2pt,0pt) node[left] {$\ytext$};  
   \draw [blue,fill] (0,0.5) circle (2pt);
   \draw [blue,fill] (0.0,1) circle (2pt);
   \draw [blue,fill] (0,1.5) circle (2pt);
   \draw [blue,fill] (0,2) circle (2pt);
   \draw [blue,fill] (0,2.7) circle (2pt);
\end{tikzpicture}
\hspace*{1mm}
\begin{tikzpicture}[thick,scale=0.9, every node/.style={transform shape}]
  \draw[style=help lines,step = 0.5] (0,0) grid (2.4,2.4);
  \draw[->] (-0.2,0) -- (2.4,0) node[right] {birth};
  \draw[->] (0,-0.2) -- (0,2.4) node[above] {death};	
  \draw[-] (0,0) -- (2.4,2.4) node[right]{};
  \foreach \x/\xtext in {0.5/0.5, 1/1, 1.5/1.5, 2.0/2}
    \draw[shift={(\x,0)}] (0pt,2pt) -- (0pt,-2pt) node[below] {$\xtext$};
  \foreach \y/\ytext in {0.5/0.5,  1/1, 1.5/1.5, 2.0/2}
    \draw[shift={(0,\y)}] (2pt,0pt) -- (-2pt,0pt) node[left] {$\ytext$};  
   \draw [red,fill] (0,0.5) circle (2pt);
   \draw [red] (0,0.5) circle (4pt);
   \draw [red,fill] (0.0,1.0) circle (2pt);
   \draw [red,fill] (0.0,1.5) circle (2pt);
   \draw [red] (0,1.5) circle (4pt);
   \draw [red,fill] (0.5,1.0) circle (2pt);
   \draw [red,fill] (1,2) circle (2pt);
   \draw [red,fill] (1.5, 2.0) circle (2pt);
   \draw [red,fill] (2, 2.7) circle (2pt);
\end{tikzpicture}
\caption{\textbf{Top}: the 5-point cloud $A = \{0,1,3,7,10\}\subset\R$.
\textbf{Bottom} from left to right: single-linkage dendrogram $\De_{SL}(A)$ from Definition~\ref{dfn:sl_clustering}, the 0D persistence diagram $\PD$ from Definition~\ref{dfn:persistence_diagram} and the new mergegram $\MG$ from Definition~\ref{dfn:mergegram}, where double circles show pairs of multiplicity 2.}
\label{fig:5-point_line}
\end{figure}


The disks of the closest points $0,1$ start overlapping at the scale $s=0.5$ when these points merge into one cluster $\{0,1\}$.
This merger is shown by blue arcs joining at the node at $s=0.5$ in the single-linkage dendrogram, see the bottom left picture in Fig.~\ref{fig:5-point_line}.
The persistence diagram $\PD$ in the bottom middle picture of
 Fig.~\ref{fig:5-point_line} represents this merger by the pair $(0,0.5)$ meaning that a singleton cluster of (say) point $1$ was born at the scale $s=0$ and then died later at $s=0.5$ by merging into another cluster of point $0$.
\medskip
 
When clusters $\{0,1,3\}$ and $\{7,10\}$ merge at $s=2$, this event was previously encoded in the persistence diagram by the single pair $(0,2)$ meaning that one cluster inherited from (say) point 7 was born at $s=0$ and died at $s=2$.
The new mergegram in the bottom right picture of Fig.~\ref{fig:5-point_line} represents the above merger by the following two pairs.
The pair $(1,2)$ means that the cluster $\{0,1,3\}$ is merging at the current scale $s=2$ and was previously formed at the smaller scale $s=1$.
The pair $(1.5,2)$ means that another cluster $\{7,10\}$ is merging at the scale $s=2$ and was previously formed at $s=1.5$. 
\medskip

The 0D persistence diagram represents the cluster of the whole cloud $A$  by the pair $(0,+\infty)$, because $A$ was inherited from a singleton cluster starting from $s=0$.
The mergegram represents the same cluster $A$ by the pair $(2,+\infty)$, because $A$ was formed during the last merger of $\{0,1,3\}$ and $\{7,10\}$ at $s=2$ and continues to live as $s\to+\infty$.
\medskip

In the above dendrogram every vertical arc going up from a scale $b$ to $d$ contributes one pair $(b,d)$ to the mergegram.
So both singleton clusters $\{7\}$, $\{10\}$ merging at $s=1.5$ contribute one pair $(0,1.5)$ of multiplicity two shown by two red circles in Fig.~\ref{fig:5-point_line}. 
\end{exa}


\begin{dfn}[single-linkage clustering]
\label{dfn:sl_clustering}
Let $A$ be a finite set in a metric space $X$ with a distance $d:X\times X\to[0,+\infty)$.
Given a distance threshold, which will be called a scale $s$, any points $a,b\in A$ should belong to one \emph{SL cluster} if and only if there is a finite sequence $a=a_1,\dots,a_m=b\in A$ such that any two successive points have a distance at most $s$, so $d(a_i,a_{i+1})\leq s$ for $i=1,\dots,m-1$.
Let $\De_{SL}(A;s)$ denote the collection of SL clusters at the scale $s$.
For $s=0$, any point $a\in A$ forms a singleton cluster $\{a\}$.
Representing each cluster from $\De_{SL}(A;s)$ over all $s\geq 0$ by one point, we get the \emph{single-linkage dendrogram} $\De_{SL}(A)$ visualizing how clusters merge, see the first bottom picture in Fig.~\ref{fig:5-point_line}.
\bs
\end{dfn}

For any $s>0$, all SL clusters $\De_{SL}(A;s)$ can be obtained as connected components of a Minimum Spanning Tree $\MST(A)$ by removing all edges longer than $s$.
\medskip

\begin{dfn}[partition set $\PS(A)$]
\label{dfn:partition}
For any set $A$, a \emph{partition} of $A$ is a finite collection of non-empty disjoint subsets $A_1,\dots,A_k\subset A$ whose union is $A$.
The \emph{single-block} partition of $A$ consists of the set $A$ itself.
The \emph{partition set} $\PS(A)$ consists of all partitions of $A$.
\bs
\end{dfn}

The partition set $\PS(A)$ of the abstract set $A=\{0,1,2\}$ consists of the five partitions
$$(\{0\},\{1\},\{2\}),\quad
(\{0,1\},\{2\}),\quad
(\{0,2\},\{1\}),\quad
(\{1,2\},\{0\}),\quad
(\{0,1,2\}).$$
For example, the collections $(\{0\},\{1\})$ and $(\{0,1\},\{0,2\})$ are not partitions of $A$.
\medskip

Definition~\ref{dfn:dendrogram} below extends a dendrogram from \cite[section~3.1]{carlsson2010characterization} to arbitrary (possibly, infinite) sets $A$.
Since every partition of $A$ is finite by Definition~\ref{dfn:partition}, we don't need to add that an initial partition of $A$ is finite.
Non-singleton sets are now allowed.

\begin{dfn}[dendrogram $\De$ of merge sets]
\label{dfn:dendrogram}
A \emph{dendrogram} $\De$ over any set $A$ is a function $\Delta:[0,+\infty)\to\PS(A)$ of a scale $s\geq 0$ satisfying the following conditions.
\smallskip

\noindent
(\ref{dfn:dendrogram}a)
There exists a scale $r\geq 0$ such that $\De(A;s)$ is the single block partition for $s\geq r$. 
\smallskip

\noindent
(\ref{dfn:dendrogram}b)
If $s\leq t$, then $\De(A;s)$ \emph{refines} $\De(A;t)$, so any set from $\De(t)$ is a subset of some set from $\De(A;t)$.
These inclusions of subsets of $X$ induce the natural map $\De_s^t:\De(s)\to\De(t)$.
\smallskip

\noindent
(\ref{dfn:dendrogram}c)
There are finitely many \emph{merge scales} $s_i$ such that $$s_0 = 0 \text{ and  } s_{i+1} = \text{sup}\{s \mid \text{ the map }  \De_s^t \text{ is identity for } s' \in [s_i,s)\}, i=0,\dots,m-1.$$

\noindent
Since $\De(A;s_{i})\to\De(A;s_{i+1})$ is not an identity map, there is a subset $B\in\De(s_{i+1})$ whose preimage consists of at least two subsets from $\De(s_{i})$.
This subset $B\subset X$ is called a \emph{merge} set and its \emph{birth} scale is $s_i$.
All sets of $\De(A;0)$ are merge sets at the birth scale 0.
The $\life(B)$ is the interval $[s_i,t)$ from its birth scale $s_i$ to its \emph{death} scale $t=\sup\{s \mid \De_{s_i}^s(B)=B\}$.
\bs
\end{dfn}

Dendrograms are usually drawn as trees whose nodes represent all sets from the partitions $\De(A;s_i)$ at merge scales.
Edges of such a tree connect any set $B\in\De(A;s_{i})$ with its preimages under $\De(A;s_{i})\to\De(A;s_{i+1})$.
Fig.~\ref{fig:3-point_dendrogram} shows $\De$ for $A=\{0,1,2\}$.
\medskip


\begin{figure}[h]
\parbox{90mm}{
\footnotesize
\begin{tabular}{lccccc}
partition $\De(A;2)$ at scale $s_2=2$ & & & $\{0,1,2\}$ & & \\
map $\De_1^2:\De(A;1)\to\De(A;2)$ & & & $\uparrow$ & $\nwarrow$ & \\
partition $\De(A;1)$ at scale $s_1=1$ & & & \{0, 1\} & & \{2\} \\
map $\De_0^1:\De(A;0)\to\De(A;1)$ & & $\nearrow$ & $\uparrow$ & & $\uparrow$  \\
partition $\De(A;0)$ at scale $s_0=0$ & $\{0\}$ & & $\{1\}$ & & \{2\} 
\end{tabular}}
\parbox{25mm}{
\begin{tikzpicture}[thick,scale=0.75, every node/.style={transform shape}]
  \draw[style=help lines,step = 1] (0,0) grid (2.4,2.4);
  \draw[->] (-0.2,0) -- (2.4,0) node[right] {birth};
  \draw[->] (0,-0.2) -- (0,2.4) node[above] {death};	
  \draw[-] (0,0) -- (2.4,2.4) node[right]{};
  \foreach \x/\xtext in {1/1, 2/2}
    \draw[shift={(\x,0)}] (0pt,2pt) -- (0pt,-2pt) node[below] {$\xtext$};
  \foreach \y/\ytext in {1/1, 2/2}
    \draw[shift={(0,\y)}] (2pt,0pt) -- (-2pt,0pt) node[left] {$\ytext$};
   \draw [red,fill] (0,1) circle (2pt);
   \draw [red] (0,1) circle (4pt);
   \draw [red,fill] (0,2) circle (2pt);
   \draw [red,fill] (1,2) circle (2pt);
   \draw [red,fill] (2,2.7) circle (2pt);
\end{tikzpicture}}
\caption{The dendrogram $\De$ on $A=\{0,1,2\}$ and its mergegram  $\MG(\De)$ from Definition~\ref{dfn:mergegram}.}
\label{fig:3-point_dendrogram}
\end{figure}

In Fig.~\ref{fig:3-point_dendrogram} the partition $\De(A;1)$ consists of $\{0,1\}$ and $\{2\}$.
The maps $\De_s^t$ induced by inclusions respect the compositions in the sense that $\De_s^t\circ\De_r^s=\De_r^t$ for any $r\leq s\leq t$.
For example, $\De_0^1(\{0\})=\{0,1\}=\De_0^1(\{1\})$ and $\De_0^1(\{2\})=\{2\}$, so $\De_0^1$ is a well-defined map from the partition $\De(A;0)$ of 3 singleton sets to $\De(A;1)$, but isn't an identity.
\medskip

At the scale $s_0=0$ the merge sets $\{0\},\{1\}$ have $\life=[0,1)$, while the merge set $\{2\}$ has $\life=[0,2)$.
At the scale $s_1=1$ the only merge set $\{0,1\}$ has $\life=[1,2)$.
At the scale $s_2=2$ the only merge set $\{0,1,2\}$ has $\life=[2,+\infty)$.
The notation $\De$ is motivated as the first (Greek) letter in the word dendrogram and by a $\De$-shape of a typical tree.
\medskip

Condition~(\ref{dfn:dendrogram}a) says that 
a partition of a set $X$ is trivial for all large scales.
Condition~(\ref{dfn:dendrogram}b) means that if the scale $s$ is increasing, then sets from a partition $\De(s)$ can only merge but can not split. 
Condition~(\ref{dfn:dendrogram}c) implies that there are only finitely many mergers, when two or more subsets of $X$ merge into a larger merge set.
\medskip

\begin{lem}\cite[Lemma 3.3]{elkin2020mergegram}
\label{lem:sl_clustering}
Given a metric space $(X,d)$ and a finite set $A\subset X$, the single-linkage dendrogram $\De_{SL}(X)$ from Definition~\ref{dfn:sl_clustering} satisfies Definition~\ref{dfn:dendrogram}.
\bs
\end{lem}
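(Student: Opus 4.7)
The plan is to verify the three conditions (\ref{dfn:dendrogram}a)--(\ref{dfn:dendrogram}c) for the map $s\mapsto\De_{SL}(A;s)$ from Definition~\ref{dfn:sl_clustering}, using the finiteness of $A$ and the $\MST(A)$ viewpoint noted right after that definition.

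For (\ref{dfn:dendrogram}a) I would let $r$ be the diameter $\max_{a,b\in A}d(a,b)$, which is finite since $A$ is finite. For any $s\geq r$, the length-$2$ chain $a,b$ already witnesses that any two points lie in the same SL cluster, so $\De_{SL}(A;s)$ is the single-block partition of $A$ for all $s\geq r$. For (\ref{dfn:dendrogram}b), if $a=a_1,\dots,a_m=b$ is a chain with successive distances at most $s$, then the same chain has successive distances at most any $t\geq s$, so the equivalence relation defining SL clusters at scale $s$ refines the one at scale $t$. Hence each set of $\De_{SL}(A;s)$ is contained in a single set of $\De_{SL}(A;t)$, which defines $\De_s^t$; the compositional consistency $\De_s^t\circ\De_r^s = \De_r^t$ for $r\leq s\leq t$ is immediate from composition of inclusions.

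For (\ref{dfn:dendrogram}c) I would exploit that the finite set of pairwise distances $D=\{d(a,b):a,b\in A,\,a\neq b\}$ contains every scale at which the SL structure can change. Concretely, as noted after Definition~\ref{dfn:sl_clustering}, the connected components of $\MST(A)$ after removing all edges of length $>s$ are exactly the SL clusters at scale $s$, so the partition is locally constant in $s$ away from the edge lengths of $\MST(A)$. Listing the distinct edge lengths of $\MST(A)$ as $\ell_1<\dots<\ell_m$ and setting $s_0=0$, $s_i=\ell_i$ for $i\geq 1$ yields a finite sequence such that $\De_{s_i}^{s'}=\id$ for every $s'\in[s_i,s_{i+1})$ while $\De_{s_i}^{s_{i+1}}$ is not an identity, matching the recursive description of merge scales in (\ref{dfn:dendrogram}c).

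I do not expect a serious obstacle: each of (a) and (b) is essentially immediate from the chain definition, and (c) reduces to the classical fact that single-linkage clustering at threshold $s$ coincides with the components of the $s$-thresholded minimum spanning tree. The only care needed is to reconcile the slightly terse wording of (\ref{dfn:dendrogram}c), which defines each $s_{i+1}$ as a supremum over scales, with the discrete picture in which the partition changes at a finite list of $\MST(A)$ edge lengths; once those edge lengths are identified as the candidate merge scales, the required supremum is attained automatically.
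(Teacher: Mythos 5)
Your proposal is correct: conditions (\ref{dfn:dendrogram}a) and (\ref{dfn:dendrogram}b) follow directly from the chain definition as you say, and your reduction of (\ref{dfn:dendrogram}c) to the finitely many distinct edge lengths of $\MST(A)$ (equivalently, the finitely many pairwise distances in $A$) is the standard argument, with the supremum in the merge-scale recursion realised at each edge length because the threshold condition $d(a_i,a_{i+1})\leq s$ is non-strict. The paper does not reproduce a proof of this lemma, deferring entirely to the cited conference version, so there is nothing in the present text to contrast with; your argument is exactly the one the citation is standing in for.
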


A \emph{mergegram} represents life spans of merge sets by pairs 
$(\birth,\death)\in\R^2$.

\begin{dfn}[mergegram $\MG(\De)$]
\label{dfn:mergegram}
The \emph{mergegram} of a dendrogram $\De$ has the pair $(\birth,\death)\in\R^2$ for each merge set $B$ of $\De$ with $\life(B)=[\birth,\death)$.
If any life interval appears $k$ times, the pair (birth,death) has the multiplicity $k$ in $\MG(\De)$.
\bs
\end{dfn}

If our input is a point cloud $A$ in a metric space, then the mergegram $\MG(\De_{SL}(A))$ is an isometry invariant of $A$, because $\De_{SL}(A)$ depends only on inter-point distances.
Though $\De_{SL}(A)$ as any dendrogram is unstable under perturbations of points, the key advantage of $\MG(\De_{SL}(A))$ is its stability, which will be proved in Theorem~\ref{thm:stability_mergegram}.
\medskip

Fig.~\ref{fig:5-point_set} shows the metric space $X=\{a,b,c,d,e\}$ with distances defined by the shortest path metric induced by the specified edge-lengths, see the distance matrix. 


\begin{figure}[H] 
\parbox{80mm}{
\begin{tikzpicture}[scale = 0.75][sloped]
\node (x) at (5,3) {x};
 \node (a) at (1,1) {a};
 \draw (a) -- node[above]{3} ++ (x);
 \node (b) at (3.5,4.0) {b};
 \draw (b) -- node[above]{1} ++ (x);
 \node (c) at (7,1) {c};
 \draw (c) -- node[below]{1} ++ (x);
 \node (y) at (8,3) {y};
 \draw (x) -- node[above]{2} ++ (y);
 \node (d) at (10,5){p};
 \node (e) at (10,1){q};
 \draw (y) -- node[below]{1} ++ (d);
 \draw (y) -- node[below]{1} ++ (e);
\end{tikzpicture}}
\parbox{40mm}{
\begin{tabular}{c|ccccc}
& a & b & c & p & q \\
\hline
a & 0 & 4 & 4 & 6 & 6 \\
b & 4 & 0 & 4 & 6 & 6 \\
c & 4 & 2 & 0 & 4 & 4 \\
p & 6 & 4 & 4 & 0 & 2 \\
q & 6 & 4 & 4 & 2 & 0 
\end{tabular}}
\caption{The set $X=\{a,b,c,d,e\}$ has the distance matrix defined by the shortest path metric.}
\label{fig:5-point_set}
\end{figure}

\begin{figure}[H]
\begin{tikzpicture}[scale = 0.6][sloped]
 \draw[style=help lines,step = 1] (-1,0) grid (10.4,6.3);
\foreach \i in {0,...,3.0} { \node at (-1.4,2*\i) {\i}; }
\node (a) at (0,-0.3) {a};
\node (b) at (4,-0.3) {b};
\node (c) at (6,-0.3) {c};
\node (d) at (8,-0.3) {p};
\node (e) at (10,-0.3) {q};
\node (x) at (5,6.75) {};

\node (de) at (9,2){};
\node (bc) at (5.0,2){};
\node (bcde) at (7.0,4){};
\node (all) at (5.0,6){};

\draw [line width=0.5mm, blue ] (a) |- (all.center);
\draw [line width=0.5mm, blue ] (b) |- (bc.center);
\draw [line width=0.5mm, blue ]  (c) |- (bc.center);
\draw [line width=0.5mm, blue ]  (d) |- (de.center);
\draw [line width=0.5mm, blue ]  (e) |- (de.center);
\draw [line width=0.5mm, blue ]  (de.center) |- (bcde.center);
\draw [line width=0.5mm, blue ]  (bc.center) |- (bcde.center);
\draw [line width=0.5mm, blue ]  (bcde.center) |- (all.center);
\draw [line width=0.5mm, blue] [->] (all.center) -> (x.center);
\end{tikzpicture}
\hspace*{1cm}
\begin{tikzpicture}[scale = 1.1]
 \draw[style=help lines,step = 1] (0,0) grid (3.4,3.4);
 \draw[->] (-0.2,0) -- (3.4,0) node[right] {birth};
  \draw[->] (0,-0.2) -- (0,3.4) node[above] {death};	
  \draw[-] (0,0) -- (3.4,3.4) node[right]{};

  \foreach \x/\xtext in {1/1, 2.0/2, 3.0/3}
    \draw[shift={(\x,0)}] (0pt,2pt) -- (0pt,-2pt) node[below] {$\xtext$};

  \foreach \y/\ytext in {1/1, 2.0/2, 3.0/3}
    \draw[shift={(0,\y)}] (2pt,0pt) -- (-2pt,0pt) node[left] {$\ytext$}; 
   \draw [red,fill] (0,1) circle (2pt);
   \draw [red] (0,1) circle (3pt);
   \draw [red] (0,1) circle (4pt);
   \draw [red] (0,1) circle (5pt);
   \draw [red,fill] (1,2) circle (2pt);
   \draw [red] (1,2) circle (4pt);
   \draw [red,fill] (2,3) circle (2pt);
\draw [red,fill] (0,3) circle (2pt);
\draw [red,fill] (3,3.7) circle (2pt);
\end{tikzpicture}
\caption{\textbf{Left}: the dendrogram $\De$ for the single linkage clustering of the set 5-point set $X=\{a,b,c,d,e\}$ in Fig.~\ref{fig:5-point_set}.
\textbf{Right}: the mergegram $\MG(\De)$ with one pair (0,1) of multiplicity 4.}
\label{fig:5-point_set_mergegram}
\end{figure}


The dendrogram $\De$ in the first picture of Fig.~\ref{fig:5-point_set_mergegram} generates the mergegram as follows:
\begin{itemize}
\item 
each of the singleton sets $\{b\}$, $\{c\}$, $\{p\}$, $\{q\}$ has pair (0,1), so its multiplicity is 4; 
\item 
each of the merge sets $\{b,c\}$ and $\{p,q\}$ has the pair (1,2), so its multiplicity is 2; 
\item 
the singleton set $\{a\}$ has the pair $(0,3)$;
the merge set $\{b,c,p,q\}$ has the pair (2,3);
\item
the full set $\{a,b,c,p,q\}$ continues to leave up to $s=3$, hence has the pair $(3,+\infty)$.
\end{itemize}

\section{Explicit relations between 0-dimensional persistence and mergegram}
\label{sec:relations}

This section recalls the concept of persistence and then shows how any 0D persistence and dendrogram in general position can be reconstructed from a mergegram.

\begin{dfn}[persistence module $\V$]
\label{dfn:persistence_module}
A \emph{persistence module} $\mathbb{V}$ over the real numbers $\mathbb{R}$ is a family of vector spaces $V_t$, $t\in \mathbb{R}$ with linear maps $v^t_s:V_s \rightarrow V_t$, $s\leq t$ such that $v^t_t$ is the identity map on $V_t$ and the composition is respected: $v^t_s \circ v^s_r = v^t_r$ for any $r \leq s \leq t$.
\bs
\end{dfn}

The set of real numbers can be considered as a category  $\mathbb{R}$ in the following sense.
The objects of $\R$ are all real numbers. 
Any two real numbers such that $a\leq b$ define a single morphism $a\to b$.
The composition of morphisms $a\to b$ and $b \to c$ is the morphism $a \leq c$. 
In this language, a persistence module is a functor from $\R$ to the category of vector spaces.
A basic example of a persistence module $\V$ is an interval module.
An interval $J$ between points $p<q$ in $\R$ can be one of the following types: closed $[p,q]$, open $(p,q)$, half-open or half-closed $[p,q)$ and $(p,q]$, all encoded as follows:
$$[p^-,q^+]:=[p,q],\quad
[p^+,q^-]:=(p,q),\quad
[p^+,q^+]:=(p,q],\quad
[p^-,q^-]:=[p,q).$$

The endpoints $p,q$ can also take the infinite values $\pm\infty$, but without superscripts.

\begin{exa}[interval module $\I(J)$]
\label{exa:interval_module}
For any interval $J\subset\R$, the \emph{interval module} $\I(J)$ is the persistence module defined by the following vector spaces $I_s$ and linear maps $i_s^t:I_s\to I_t$
$$I_s=\left\{ \begin{array}{ll} 
\Z_2, & \mbox{ for } s\in J, \\
0, & \mbox{ otherwise }; 
\end{array} \right.\qquad
i_s^t=\left\{ \begin{array}{ll} 
\id, & \mbox{ for } s,t\in J, \\
0, & \mbox{ otherwise }
\end{array} \right.\mbox{ for any }s\leq t.$$
\end{exa}
\medskip

The direct sum $\W=\mathbb{U}\oplus\V$ of persistence modules $\mathbb{U},\V$ is defined  as the persistence module with the vector spaces $W_s=U_s\oplus V_s$ and linear maps $w_s^t=u_s^t\oplus v_s^t$.
\medskip

We illustrate the abstract concepts above by geometric constructions.
Let $f:X\to\R$ be a continuous function on a topological space.
The \emph{sublevel} sets $X_s^f=f^{-1}((-\infty,s])$ form nested subspaces $X_s^f\subset X_t^f$ for any $s\leq t$.
The inclusions of the sublevel sets respect compositions similarly to a dendrogram $\De$ in Definition~\ref{dfn:dendrogram}.
On a metric space $X$ with a metric $d:X\times X\to[0,+\infty)$, a typical example of a function $f:X\to\R$ is the distance $d_A$ to a finite subset $A\subset X$.
For any point $p\in X$, let $d_A(p)$ be the distance from $p$ to a closest point of $A$.
For any $r\geq 0$, the preimage $X_r^{d_A}=d_A^{-1}((-\infty,r])=\{p\in X \mid d_A(p)\leq r\}$ is the union of closed balls with radius $r$ and centers at all points $q\in A$.
For example, $X_0^{d_A}={d_A}^{-1}((-\infty,0])=A$ and $X_{+\infty}^{d_A}={d_A}^{-1}(\R)=X$.
\medskip

If we consider any continuous function $f:X\to\R$, we have the inclusion $X_s^f\subset X_r^f$ for any $s\leq r$.
Hence all sublevel sets $X_s^f$ form a nested sequence of subspaces within $X$.
The above construction of a \emph{filtration} $\{X_s^f\}$ can be considered as a functor from $\R$ to the category of topological spaces.  
Below we discuss the simplest case of dimension 0.

\begin{exa}[persistent homology]
\label{exa:persistent_homology}
For any topological space $X$,  the 0-dimensional \emph{homology} $H_0(X)$ is the vector space (with coefficients $\Z_2$) generated by all connected components of $X$.
Let $\{X_s\}$ be any \emph{filtration} of nested spaces, e.g. sublevel sets $X_s^f$ based on a continuous function $f:X\to\R$.
The inclusions $X_s\subset X_r$ for $s\leq r$ induce the linear maps between homology groups $H_0(X_s)\to H_0(X_r)$ and define the \emph{persistent homology} $\{H_0(X_s)\}$, which satisfies the conditions of a persistence module from Definition~\ref{dfn:persistence_module}.
\bs
\end{exa}
\medskip

If $X$ is a finite set of $m$ points, then $H_0(X)$ is the direct sum $\Z_2^m$ of $m$ copies of $\Z_2$.  
\medskip

The persistence modules that can be decomposed as direct sums of interval modules can be described in a simple combinatorial way by persistence diagrams in $\R^2$.

\begin{dfn}[persistence diagram $\PD(\V)$]
\label{dfn:persistence_diagram}
Let a persistence module $\V$ be decomposed as a direct sum of interval modules : $\V\cong\bigoplus\limits_{l \in L}\I(p^{*}_l,q^{*}_l)$, where $*$ is $+$ or $-$.
The \emph{persistence diagram} $\PD(\V)$ is the multiset 
$\PD(\mathbb{V}) = \{(p_l,q_l) \mid l \in L \} \setminus \{p=q\}\subset\R^2$.
\bs
\end{dfn}
\medskip

The 0-dimensional persistence diagram of a topological space $X$ with a continuous function $f:X\to\R$ is denoted by $\PD\{H_0(X_s^f)\}$.
Lemma~\ref{lem:merge_module_decomposition} will prove that the merge module $M(\De)$ of any dendrogram $\De$ is decomposable into interval modules.
Hence the mergegram $\MG(\De)$ from can be interpreted as the persistence diagram of $M(\De)$. 
\medskip

The following result  describes how the persistence diagram $\PD$ of the distance-based filtration of any point cloud $A$ can be obtained from the mergegram $\MG(\De_{\SL}(S))$.

\begin{thm}\cite[Theorem~5.3]{elkin2020mergegram}
\label{thm:mergegram_to_0D_persistence} 
For a finite set $A$ in a metric space $(X,d)$, let $d_A:X\to\R$ be the distance to $A$.
Let the mergegram $\MG(\De_{SL}(A))$ be a multiset $\{(b_i,d_i)\}_{i=1}^k$, where some pairs can be repeated.
Then the persistence diagram $\PD\{H_0(X_s^{d_A})\}$ is the difference of the multisets $\{(0,d_i)\}_{i=1}^{k}-\{(0,b_i)\}_{i=1}^{k}$ containing each pair $(0,s)$ exactly $\#b-\#d$ times, where $\#b$ is the number of births $b_i=s$ and  $\#d$ is the number of deaths $d_i=s$.
All trivial pairs $(0,0)$ are ignored, alternatively we take $\{(0,d_i)\}_{i=1}^{k}$ only with $d_i>0$.
\bs
\end{thm}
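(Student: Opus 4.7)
The plan is to identify the 0-dimensional persistence module of the distance-based filtration with a combinatorial module built from single-linkage clusters of $A$, and then to count contributions to both diagrams scale by scale. First, I would observe that $X_s^{d_A}$ is the union of closed $s$-balls centered at the points of $A$, and its path-connected components are in canonical bijection with the blocks of the partition $\De_{SL}(A;s)$ under the ball-radius convention already used in Example~\ref{exa:5-point_line}: two points of $A$ lie in the same component of $X_s^{d_A}$ iff a chain of overlapping $s$-balls joins them, which is precisely the chain condition of Definition~\ref{dfn:sl_clustering}. These bijections are compatible with the inclusions $X_s^{d_A}\subset X_t^{d_A}$ and the refinement maps $\De_s^t$, so the whole persistence module $\{H_0(X_s^{d_A})\}$ is isomorphic to the cluster-counting module over $\De_{SL}(A)$. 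It follows that $\{H_0(X_s^{d_A})\}$ decomposes into interval summands in bijection with the points of $A$: for each $a\in A$ one interval $[0,d_a)$, where $d_a$ is the first scale at which the SL cluster of $a$ strictly grows, together with a single infinite summand $[0,+\infty)$ for the cluster that eventually absorbs all of $A$.

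Next, I would carry out a scale-by-scale count of births and deaths on both sides. At $s=0$ the mergegram records $|A|$ births (one per singleton) and no nontrivial deaths, matching the $|A|$ intervals of $\{H_0(X_s^{d_A})\}$ born at $0$. At any merge scale $s>0$ where exactly $k\geq 2$ existing merge sets $B_1,\dots,B_k$ combine into a new merge set $B=B_1\cup\dots\cup B_k$, the mergegram records $k$ deaths (one per $B_j$) together with exactly one birth (for $B$), whereas the persistent homology loses $k-1$ dimensions as $k$ components of $X_{s-\ep}^{d_A}$ coalesce into a single component of $X_s^{d_A}$. Hence the number of 0D persistence pairs with death coordinate equal to $s$ equals $\#d(s)-\#b(s)$, which is precisely the multiplicity of $(0,s)$ in the multiset difference $\{(0,d_i)\}_{i=1}^{k}-\{(0,b_i)\}_{i=1}^{k}$. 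At $s=+\infty$ the full cluster $A$ contributes one death with no matching birth, producing the essential pair $(0,+\infty)$; the singleton births $b_i=0$ subtract only $(0,0)$ terms, which are discarded as trivial pairs in the resulting diagram.

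Summing these contributions over all merge scales yields the claimed multiset identity, and conceptually each intermediate merge set $B$ with $b_B>0$ contributes equally and with opposite sign to $\{(0,d_i)\}$ and $\{(0,b_i)\}$ because $B$ represents only a relabelling of an already existing persistent component rather than the start of a new one, so its birth must cancel one of its own deaths when passing from the mergegram to the 0D persistence diagram. The main obstacle in a fully formal write-up is to make the bijection between components of $X_s^{d_A}$ and blocks of $\De_{SL}(A;s)$ rigorous at merger scales, where balls meet tangentially and the sublevel set is not locally stable, and to reconcile the ball-radius convention of the filtration with the distance-threshold wording of Definition~\ref{dfn:sl_clustering}. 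Once the interval decomposition of $\{H_0(X_s^{d_A})\}$ is established cleanly, Definition~\ref{dfn:persistence_diagram} turns the counting argument above into the stated identity.
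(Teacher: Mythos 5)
The paper never actually proves this theorem: it is imported from the conference version \cite{elkin2020mergegram} and only illustrated on Example~\ref{exa:5-point_line}, so there is no in-paper proof to compare against. Your overall strategy --- identify the components of $X_s^{d_A}$ with the blocks of $\De_{SL}(A;s)$ and then count mergegram births and deaths against 0D persistence deaths at each merge scale --- is the natural route, and the count in your second paragraph is right: a merger of $k$ sets contributes $k$ deaths and one birth to the mergegram versus $k-1$ deaths to $H_0$, the pair of the full cluster survives as the essential class $(0,+\infty)$, and the singleton births only remove discarded $(0,0)$ terms. (As your count and Example~\ref{exa:5-point_line} both show, the ``$\#b-\#d$'' in the statement must be read as $\#d-\#b$.)

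There is, however, a genuine error in your first paragraph that contradicts your own count. You assert an interval decomposition with one summand $[0,d_a)$ for each $a\in A$, where $d_a$ is the first scale at which the SL cluster of $a$ strictly grows, \emph{plus} a separate infinite summand. For $A=\{0,1\}\subset\R$ this yields three intervals $[0,0.5)$, $[0,0.5)$, $[0,+\infty)$ for a module of dimension $2$ at $s=0$: when two singletons merge, only one class dies (the elder rule), not both, and the infinite class is one of the $|A|$ classes born at $0$, not an extra one. The correct decomposition has exactly $|A|$ summands, with $(0,s)$ of multiplicity $k-1$ per $k$-fold merger --- which is what your second paragraph actually uses, so the repair is to delete the faulty decomposition rather than to change the count. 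Two further points that you flag but should resolve explicitly: the identification of components of $X_s^{d_A}$ with SL clusters forces the half-distance (ball-radius) convention for the dendrogram scales, as in Example~\ref{exa:5-point_line} and the MST half-length remark, rather than the literal threshold of Definition~\ref{dfn:sl_clustering}; and the step ``the closed balls $B(a,s)$ and $B(b,s)$ meet iff $d(a,b)\le 2s$'' requires $X$ to be geodesic (true in $\R^m$, but in a general metric space $X_s^{d_A}$ is not determined by the distance matrix of $A$, so the theorem needs this hypothesis to be stated or assumed).
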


Theorem~\ref{thm:mergegram_to_0D_persistence} is illustrated by Example~\ref{exa:5-point_line}, where $A=\{0,1,3,7,10\}$ has the diagram $\PD(A)=\{(0,0.5),(0,1),(0,1.5),(0,2),(0,+\infty)\}$ obtained from the mergegram
$$\MG(\De_{\SL}(A))=\{(0,0.5),(0,0.5),(0,1),(0,1.5),(0,1.5),(0.5,1),(1,2),(1.5,2),(2,+\infty)\}$$
as follows: one pair $(0,0.5)\in\PD(A)$ comes from two deaths and one birth $s=0.5$ in $\MG(\De_{\SL}(A))$.
Similarly each of the pairs $(0,1),(0,1.5),(0,2)\in\PD(A)$ comes from two deaths and one birth equal to the same scale $s$.
The cloud $B=\{0,4,6,9,10\}\subset\R$ in \cite[Example~1.1]{elkin2020mergegram} has exactly the same $\PD(B)=\PD(A)$ but different $\MG(\De_{\SL}(B))\neq\MG(\De_{\SL}(A))$.
This example together with Theorem~\ref{thm:mergegram_to_0D_persistence} justify that the mergegram is strictly stronger than 0D persistence as an isometry invariant of a point cloud.
\medskip
 

New Reconstruction Theorem~\ref{thm:mergegram-to-dendrogram} below can be contrasted with the weakness of 0D persistence $\PD\{H_0(X_s^{d_A})\}$ consisting of only pairs $(0,s)$ whose finite deaths are half-lengths of edges in a Minimum Spanning Tree $\MST(A)$.
In Example~\ref{exa:5-point_line} these scales $s=0.5,1,1.5,2$ are insufficient to reconstruct the SL dendrogram in Fig.~\ref{fig:5-point_line}. 
Such a unique reconstruction is possible by using the richer invariant mergegram as follows. 

\begin{thm}[from a mergegram to a dendrogram]
\label{thm:mergegram-to-dendrogram}
Let $A$ be a finite point cloud in \emph{general position} in the sense that all merge scales of $A$ in a dendrogram $\De$ from Definition~\ref{dfn:dendrogram} are different.
Then the dendrogram $\De$ can be reconstructed from its mergegram $\MG(\De)$, uniquely up to a permutation of nodes in $\De$ at scale $s=0$.
\bs
\end{thm}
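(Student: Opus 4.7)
The plan is to reconstruct $\De$ bottom-up by reading off its merge events in order of increasing scale. Under the general-position hypothesis, each event is unambiguously identified from $\MG(\De)$ as one parent pair together with the group of child pairs sharing the same death scale, so the tree structure of $\De$ is forced by $\MG(\De)$. First I extract the merge scales: let $S$ consist of every finite birth and death coordinate of a pair in $\MG(\De)$, together with the birth coordinate of the unique pair whose death is $+\infty$. By Definition~\ref{dfn:dendrogram}(c), $S$ is exactly the set of merge scales of $\De$, so ordering gives $0 = s_0 < s_1 < \cdots < s_m$. The general-position assumption means no two mergers happen at the same scale, equivalently that for each $i \geq 1$ exactly one new merge set is born at $s_i$.

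Second, at each merge scale $s_i$ with $i \geq 1$ I identify parent and children: the pairs $(b, s_i) \in \MG(\De)$ with $b < s_i$ are exactly the merge sets whose life intervals terminate at $s_i$, i.e.\ the children, and by the previous observation there is precisely one pair of the form $(s_i, d) \in \MG(\De)$ with $d > s_i$ (or $d = +\infty$ when $i = m$), representing the unique parent merge set born at $s_i$.

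Third, I reconstruct a tree $\tilde\De$ inductively. The leaves at scale $0$ are the pairs $(0, d) \in \MG(\De)$, each treated as an unlabelled singleton that dies at scale $d$. Ascending through $s_1, \ldots, s_m$, at every scale $s_i$ I create a fresh internal node for the parent pair and attach as its immediate children the already-built subtrees whose roots are the child pairs found in step two. Every pair of $\MG(\De)$ is processed exactly once, so $\tilde\De$ is a well-defined dendrogram in the sense of Definition~\ref{dfn:dendrogram} and $\MG(\tilde\De) = \MG(\De)$. Because each step is forced by $\MG(\De)$, the tree $\tilde\De$ coincides with $\De$ up to the choice of bijection between its scale-$0$ leaves and the points of $A$, which is precisely a permutation of the nodes at scale $s = 0$.

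The main obstacle is justifying the uniqueness of the parent pair at each merge scale. Without general position, $\MG(\De)$ could contain distinct pairs $(s_i, d)$ and $(s_i, d')$ coming from two independent simultaneous mergers, and one could not decide from $\MG(\De)$ alone which of the children dying at $s_i$ belong to which parent. The distinctness of merge scales assumed in the hypothesis is precisely the condition that rules this pathology out and makes the bottom-up reconstruction above deterministic.
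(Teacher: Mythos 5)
Your proposal is correct and follows essentially the same route as the paper's proof: process the merge scales in increasing order, read off from $\MG(\De)$ the pairs dying at each scale, and identify each non-singleton cluster to be merged uniquely by its birth scale (which general position guarantees), leaving only the choice of which points of $A$ play the role of the singleton leaves. Your version is slightly more explicit about extracting the merge scales from the mergegram and about why the parent born at each scale is unique, but the underlying argument is the same.
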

\begin{proof}
Consider all merge scales one by one in the increasing order starting from the smallest.
The general position implies that only two clusters merge at any merge scale. 
For any current scale $s$, the mergegram contains exactly two pairs $(b_1,s)$ and $(b_2,s)$.
\medskip

For a smallest merge scale $s>0$, the births should be $b_1=b_2=0$.
We start drawing a dendrogram $\De$ by merging any two points of $A$ at this smallest scale $s$.
To realize a merger at any larger $s$, we should select two clusters representing $(b_1,s)$ and $(b_2,s)$. 
\medskip

If $b_i=0$ then we take any of the unmerged points of $A$.
If $b_i>0$ then the already constructed dendrogram should contain a unique non-singleton cluster determined by the scale $b_i\in(0,s)$.
Hence at any merge scale $s$ we know how to select two clusters to merge.
The only choice comes from choosing points of $A$ or permuting notes of $\De$.
\end{proof}

Following the above proof of Theorem~\ref{thm:mergegram-to-dendrogram} for the cloud $A=\{0,1,3,7,10\}$ in Example~\ref{exa:5-point_line}, the first two pairs $(0,0.5)\in\MG(\De_{\SL}(A))$ indicate that we should merge two points of $A$ at $s=0.5$.
The scale $s=0.5$ uniquely determines this 2-point cluster. 
\medskip

The next two pairs $(0,1),(0.5,1)$ mean that the above cluster born at $s=0.5$ should merge at $s=1$ with a singleton cluster (any free point of $A$).
The resulting 3-point cluster is uniquely determined by its merge scale $s=1$.
The further two pairs $(0,1.5),(0,1.5)$ say that a new 2-point cluster is formed at $s=1.5$ by the two remaining points of $A$.
\medskip

The final pairs $(1,2),(1.5,2)$ tell us to merge at $s=2$ the two clusters formed earlier at $s=1$ and $s=1.5$.
The resulting dendrogram $\De$ has the expected combinatorial structure as in Fig.~\ref{fig:5-point_line}, though we can draw $\De$ in another way by permuting points of $A$.

\section{Stability of the mergegram for any single-linkage dendrogram}
\label{sec:stability}

This section fully proves the stability of a mergegram, which was stated in \cite[Theorem~7.4]{elkin2020mergegram} without proving key Lemmas~\ref{lem:merge_module_decomposition} and \ref{lem:merge_modules_interleaved}.
For simplicity, we consider vector spaces with coefficients only in $\Z_2=\{0,1\}$, which can be replaced by any field.
\medskip

Definition~\ref{dfn:homo_modules} introduces homomorphisms between persistence modules, which are needed to state the stability of persistence diagrams $\PD\{H_0(X_s^f)\}$ under perturbations of a function $f:X\to\R$.
This result will imply a stability for the mergegram $\MG(\De_{SL}(A))$ for the dendrogram $\De_{SL}(A)$ of the single-linkage clustering of a set $A\subset X$.

\begin{dfn}[a homomorphism of a degree $\de$ between persistence modules]
\label{dfn:homo_modules}
Let $\mathbb{U}$ and $\mathbb{V}$ be persistent modules over $\mathbb{R}$. 
A \emph{homomorphism} $\mathbb{U}\to\V$ of \emph{degree} $\delta\in\R$ is a collection of linear maps $\phi_t:U_t \rightarrow V_{t+\delta}$, $t \in \mathbb{R}$, such that the diagram commutes for all $s \leq t$. 
\begin{figure}[H]
\centering
\begin{tikzpicture}[scale=1.0]
  \matrix (m) [matrix of math nodes,row sep=3em,column sep=4em,minimum width=2em]
  {
     U_s & U_t \\
     V_{s+\delta} & V_{t+\delta} \\};
  \path[-stealth]
    (m-1-1) edge node [left] {$\phi_s$} (m-2-1)
            edge [-] node [above] {$u^t_s$} (m-1-2)
    (m-2-1.east|-m-2-2) edge node [above] {$v^{t+\delta}_{s+\delta}$}
            node [above] {} (m-2-2)
      (m-1-2) edge node [right] {$\phi_t$} (m-2-2);
\end{tikzpicture}
\end{figure}
Let $\text{Hom}^\delta(\mathbb{U},\mathbb{V})$ be all homomorphisms $\mathbb{U}\rightarrow \mathbb{V}$  of degree $\delta$.
Persistence modules $\mathbb{U},\V$ are \emph{isomorphic} if they have inverse homomorphisms $\mathbb{U}\to\V\to\mathbb{U}$ of degree $0$.
\bs
\end{dfn}

For a persistence module $\V$ with maps $v_s^t:V_s\to V_t$, the simplest example of a homomorphism of a degree $\de\geq 0$
 is $1_{\V}^{\de}:\V\to\V$ defined by the maps $v_s^{s+\de}$, $t\in\R$.
So $v_s^t$ defining the structure of $\V$ shift all vector spaces $V_s$ by the difference $\de=t-s$.
\medskip

The concept of interleaved modules below is an algebraic generalization of a geometric perturbation of a set $X$ in terms of (the homology of) its sublevel sets $X_s$.

\begin{dfn}[interleaving distance ID]
\label{dfn:interleaving_distance}
Persistence modules $\mathbb{U}$ and $\mathbb{V}$ are called $\delta$-interleaved if there are homomorphisms $\phi\in \text{Hom}^\delta(\mathbb{U},\mathbb{V})$ and $\psi \in \text{Hom}^\delta(\mathbb{V},\mathbb{U}) $ such that $\phi\circ\psi = 1_{\mathbb{V}}^{2\de} \text{ and } \psi\circ\phi = 1_{\mathbb{U}}^{2\de}$.
The \emph{interleaving distance} between the persistence modules $\mathbb{U}$ and $\mathbb{V}$ is 
$\ID(\mathbb{U},\V)=\inf\{\de\geq 0 \mid \mathbb{U} \text{ and } \mathbb{V} \text{ are } \delta\text{-interleaved} \}.$
\bs
\end{dfn}

If $f,g:X\to\R$ are continuous functions such that $||f-g||_{\infty}\leq\de$ in the $L_{\infty}$-distance, the modules $H_k\{f^{-1}(-\infty,s]\}$, $H_k\{g^{-1}(-\infty,s]\}$ are $\de$-interleaved for any $k$ \cite{cohen2007stability}.
The last conclusion extends to persistence diagrams for the bottleneck distance below.

\begin{dfn}[bottleneck distance BD]
\label{dfn:bottleneck_distance}
Let multisets $C,D$ contain finitely many points $(p,q)\in\R^2$, $p<q$, of finite multiplicity and all diagonal points $(p,p)\in\R^2$ of infinite multiplicity.
For $\de\geq 0$, a $\de$-matching is a bijection $h:C\to D$ such that $|h(a)-a|_{\infty}\leq\de$ in the $L_{\infty}$-distance for any point $a\in C$.
The \emph{bottleneck} distance between persistence modules $\mathbb{U},\V$ is $\BD(\mathbb{U},\mathbb{V}) = \text{inf}\{ \delta \mid \text{ there is a }\delta\text{-matching between } \PD(\mathbb{U}), \PD(\mathbb{V})\}$. 
\bs
\end{dfn}

The original stability of persistence for sequences of sublevel sets was extended as Theorem~\ref{thm:stability_persistence} to $q$-tame persistence modules. 
A persistence module $\V$ is $q$-tame if any non-diagonal square in the persistence diagram $\PD(\V)$ contains only finitely many of points, see \cite[section~2.8]{chazal2016structure}.  
Any finitely decomposable persistence module is $q$-tame.
  
\begin{thm}[stability of persistence modules]\cite[isometry theorem~4.11]{chazal2016structure}
\label{thm:stability_persistence}
 Let $\mathbb{U}$ and $\mathbb{V}$ be q-tame persistence modules. Then $\ID(\mathbb{U},\mathbb{V}) = \BD(\PD(\mathbb{U}),\PD(\mathbb{V}))$,
 where $\ID$ is the interleaving distance, $\BD$ is the bottleneck distance between persistence modules.
\bs
\end{thm}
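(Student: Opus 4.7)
The plan is to establish both inequalities $\BD(\PD(\mathbb{U}),\PD(\mathbb{V}))\le\ID(\mathbb{U},\mathbb{V})$ and $\ID(\mathbb{U},\mathbb{V})\le\BD(\PD(\mathbb{U}),\PD(\mathbb{V}))$ separately. I would first reduce to the setting where $\mathbb{U},\mathbb{V}$ are decomposable into interval modules, because in that case both sides of the desired equality are controlled by combinatorial data on the intervals. For pointwise finite-dimensional modules, the Crawley-Boevey decomposition gives a unique such presentation; for merely $q$-tame modules I would pass through finite-dimensional approximations $\mathbb{U}^{\epsilon},\mathbb{V}^{\epsilon}$ obtained by restricting and truncating to a finite grid of scales, show that these approximations decompose into intervals, and then take a limit as $\epsilon\to 0$ using the fact that the rectangle measure on $\R^2$ encoded by $\PD$ behaves continuously under $q$-tame truncation.

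For the easy direction $\ID\le\BD$, suppose a $\delta$-matching $h:\PD(\mathbb{U})\to\PD(\mathbb{V})$ exists. I would build the homomorphisms $\phi\in\mathrm{Hom}^{\delta}(\mathbb{U},\mathbb{V})$ and $\psi\in\mathrm{Hom}^{\delta}(\mathbb{V},\mathbb{U})$ block by block on the interval decomposition. For each matched pair of intervals $I(p_1,q_1)$ in $\mathbb{U}$ and $I(p_2,q_2)$ in $\mathbb{V}$ with $|p_1-p_2|\le\delta$ and $|q_1-q_2|\le\delta$, I write down the canonical non-zero map from $\I(p_1,q_1)$ shifted by $\delta$ into $\I(p_2,q_2)$ on the overlap, and zero elsewhere. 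For intervals matched to the diagonal (hence of length at most $2\delta$), I use the zero map in both directions. The interleaving equations $\phi\circ\psi=1_{\mathbb{V}}^{2\delta}$ and $\psi\circ\phi=1_{\mathbb{U}}^{2\delta}$ then reduce to a check on each interval summand, where they follow from the trivial fact that $v_s^{s+2\delta}$ on an interval of length $\le 2\delta$ is zero.

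For the harder direction $\BD\le\ID$, suppose $\phi,\psi$ realize a $\delta$-interleaving. I would use an induced matching argument in the style of Bauer-Lesnick: first define, for each threshold $t$, the image persistence module $\mathrm{im}\,\phi$ and track which intervals of $\mathbb{U}$ survive the composition $\psi\circ\phi=1_{\mathbb{U}}^{2\delta}$. An interval $I(p,q)$ of $\mathbb{U}$ with $q-p>2\delta$ must be sent, via $\phi$, to a non-zero image inside an interval of $\mathbb{V}$ whose endpoints lie within $\delta$ of $p,q$; conversely any unmatched interval has length $\le 2\delta$ and can be paired with the diagonal at cost $\le\delta$. Producing this assignment consistently and bijectively is the step I expect to be the main obstacle, because one has to show that distinct long intervals of $\mathbb{U}$ are forced into distinct intervals of $\mathbb{V}$, which in turn requires a rank/dimension counting argument at every pair of scales $(s,t)$. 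This is typically handled by working with the rectangle measure $r_{\mathbb{U}}(s,t)=\dim\mathrm{im}(u_s^t)$ and exploiting that $\delta$-interleavings induce $\delta$-shifts that weakly compare these measures; the resulting inequalities on ranks across all rectangles force the bijective matching via a Hall-type argument. Putting both directions together yields the equality $\ID=\BD$ on $q$-tame modules.
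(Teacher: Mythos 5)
This theorem is not proved in the paper at all: it is imported verbatim as the isometry theorem (Theorem~4.11) of Chazal, de~Silva, Glisse and Oudot, and the paper only uses it as a black box in the proof of Theorem~\ref{thm:stability_mergegram}. So there is no in-paper argument to compare yours against; what can be judged is whether your sketch would actually reconstitute the cited result. Your easy direction is sound: the block-by-block construction of a $\delta$-interleaving from a $\delta$-matching works, and the composition check on each summand does reduce to the observation that the $2\delta$-shift kills intervals of length at most $2\delta$ (one should verify that the canonical shifted map between matched intervals is non-zero exactly where the $2\delta$-shift is, which follows from the endpoint inequalities $|p_1-p_2|\le\delta$, $|q_1-q_2|\le\delta$; this goes through).

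The hard direction and the $q$-tame reduction are where the genuine gaps lie. First, you explicitly defer the core of $\BD\le\ID$ (``producing this assignment consistently and bijectively is the step I expect to be the main obstacle''), and that step \emph{is} the theorem: the rank inequalities induced by an interleaving do not by themselves hand you a bijection, and making the Hall-type argument work requires either the interpolation lemma plus the box/quadrant measure machinery of Chazal et al., or the induced-matching theorem of Bauer--Lesnick, each of which is a substantial piece of work you have named but not carried out. Second, your reduction of the $q$-tame case to the interval-decomposable case by ``truncating to a finite grid'' does not work as stated: a $q$-tame module restricted to finitely many scales can still have infinite-dimensional spaces $V_t$ (only the ranks of the structure maps $v_s^t$ for $s<t$ are finite), so the truncation need not be interval-decomposable, and the persistence diagram of a $q$-tame module is not defined through a decomposition in the first place but through the rectangle measure $(s,t)\mapsto\mathrm{rank}(v_s^t)$. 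The correct route is to prove the stability of these measures directly (or pass to the radical/observable quotient), which is precisely why the cited proof is long. As a proof, then, this is an outline with the two hardest steps acknowledged rather than closed; as a summary of how the literature proves the theorem the paper cites, it is accurate in outline.
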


\begin{dfn}[merge module $M(\De)$]
\label{dfn:merge_module}
For any dendrogam $\De$ on a finite set $X$,
the \emph{merge module} $M(\De)$ consists of the vector spaces $M_s(\De)$, $s\in\R$, and linear maps $m_s^t:M_s(\De)\to M_t(\De)$, $s\leq t$.
For any $s\in\R$ and $A\in\De(s)$, the space $M_s(\De)$ has the generator or a basis vector $[A]\in M_s(\De)$.
For $s<t$ and any set $A\in\De(s)$, 
if the image of $A$ under $\De_s^t$ coincides with $A\subset X$, so $\De_s^t(A)=A$, then $m_s^t([A])=[A]$, else $m_s^t([A])=0$. 
\bs
\end{dfn}

\begin{figure}[h]
\begin{tabular}{lccccccccc}
scale $s_3=+\infty$ & 0 & & & & & 0 \\
map $m_2^{+\infty}$ & $\uparrow$ & & & & & $\uparrow$\\
scale $s_2=2$ & $\Z_2$ & & & 0 & 0 & [\{0,1,2\}]\\
map $m_1^2$ & $\uparrow$ & & & $\uparrow$ & $\uparrow$\\
scale $s_1=1$ & $\Z_2\oplus\Z_2$ & 0 & 0 & [\{2\}] & [\{0,1\}] \\
map $m_0^1$ & $\uparrow$ & $\uparrow$ & $\uparrow$ & $\uparrow$ \\
scale $s_0=0$ & $\Z_2\oplus\Z_2\oplus\Z_2$ & [\{0\}] & [\{1\}] & [\{2\}] &
\end{tabular}
\caption{The merge module $M(\De)$ of the dendrogram $\De$ on the set $X=\{0,1,2\}$ in Fig.~\ref{fig:3-point_dendrogram}.}
\label{fig:3-point_module}
\end{figure}

In a dendrogram $\De$ from Definition~\ref{dfn:dendrogram}, any merge set $A$ of $\De$ has $\life(A)=[b,d)$ from its birth scale $b$ to its death scale $d$.
Lemmas~\ref{lem:merge_module_decomposition} and~\ref{lem:merge_modules_interleaved} are proved for the first time. 

\begin{lem}[merge module decomposition]
\label{lem:merge_module_decomposition}
For any dendrogram $\De$ from Definition~\ref{dfn:dendrogram}, the merge module $M(\De)\cong\bigoplus\limits_{A}\mathbb{I}(\life(A))$ decomposes over all merge sets $A$.
\bs
\end{lem}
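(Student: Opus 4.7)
The plan is to build a canonical isomorphism
\[
\Phi \colon \bigoplus_{A \text{ merge set of } \De} \I(\life(A)) \longrightarrow M(\De)
\]
and verify that it satisfies Definition~\ref{dfn:homo_modules} with degree $\de=0$. For each merge set $A$ with $\life(A)=[b_A,d_A)$, the summand $\I([b_A,d_A))$ is $\Z_2$ on $[b_A,d_A)$ and $0$ elsewhere; at each scale $s\in\life(A)$ I send its one-dimensional generator to the basis vector $[A]\in M_s(\De)$, and I extend by zero outside the life interval.

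The first step is to establish, at every fixed scale $s$, a bijection between the index set $\{A : s\in\life(A)\}$ and the basis $\{[B] : B\in\De(s)\}$ of $M_s(\De)$. One direction is immediate from Definition~\ref{dfn:dendrogram}: a merge set $A$ exists as $A$ in $\De(s)$ precisely for $s\in[b_A,d_A)$. For the reverse direction I argue that every $B\in\De(s)$ is itself a merge set, either because $s\in[0,s_1)$ and $B$ belongs to the initial partition $\De(0)$, or because $B$ appeared at some earlier merge scale $s_i\leq s$ as the non-identity image of the transition $\De_{s_{i-1}}^{s_i}$ and then persisted unchanged up to $s$. This gives an inverse assignment $B\mapsto B$ on the index side and shows that each component map $\Phi_s$ is a linear isomorphism.

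The second step is compatibility with the transition maps for any $s\leq t$. On the summand $\I(\life(A))$, the transition is the identity when both $s$ and $t$ lie in $\life(A)$ and is $0$ otherwise. Under $\Phi$ this must equal $m_s^t([A])$, which by Definition~\ref{dfn:merge_module} is $[A]$ when $\De_s^t(A)=A$ and $0$ when $A$ has been absorbed into a strictly larger cluster at or before scale $t$. Since $\De_s^t(A)$ is by construction the unique element of $\De(t)$ containing $A$ as a subset, the equality $\De_s^t(A)=A$ is equivalent to $A\in\De(t)$, which under the assumption $s\in\life(A)$ is equivalent to $t<d_A$, i.e.\ $t\in\life(A)$. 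The two maps therefore agree, the square in Definition~\ref{dfn:homo_modules} commutes, and a symmetric construction supplies the inverse of $\Phi$.

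The only real difficulty is bookkeeping the boundary cases: the initial partition $\De(0)$, whose blocks are merge sets of birth $0$ by convention and account for the singleton contributions to $\MG(\De)$, and the final merge set whose life extends to $+\infty$ and thereby contributes an interval module of the form $\I([b_A,+\infty))$. Once the correspondence $\De(s) \leftrightarrow \{A : s\in\life(A)\}$ is pinned down, the decomposition is essentially forced, because the transition law of $M(\De)$ is literally the indicator function of the life interval of each generator $[A]$.
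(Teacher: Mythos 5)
Your proposal is correct and follows essentially the same route as the paper's own proof: both construct the scale-wise isomorphism sending the generator of $\I(\life(A))$ at a scale $s\in\life(A)$ to the basis vector $[A]\in M_s(\De)$ (and zero otherwise) and then verify that the squares with the transition maps $m_s^t$ commute, using that $m_s^t([A])$ is exactly the indicator of $t\in\life(A)$. If anything, you are slightly more explicit than the paper in justifying the bijection between blocks of $\De(s)$ and merge sets alive at $s$, a fact the paper's proof simply asserts.
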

\begin{proof}[Proof of Lemma~\ref{lem:merge_module_decomposition}]
The goal is to prove that $M(\triangle) \cong \bigoplus_{A}\mathbb{I}(\text{life}(A))$. 
Recall that the interval module $\mathbb{I}(\text{life}(A))$
 consists of only vector spaces $0$ and $\Z_2$.
For a scale $r$, let $\mathbb{I}_r(\text{life}(A))$ be its vector space, whose generator is denoted by $[\mathbb{I}_r(\text{life}(A))]$.  
Define
$$\psi_r:M_r(\triangle) \rightarrow \bigoplus_{A}\mathbb{I}_r(\text{life}(A)) \text{ such that } [A] \rightarrow [\mathbb{I}_r(\text{life}(A))] \text{ for all }A \in \triangle(r),$$
$$\phi_r:\bigoplus_{A}\mathbb{I}_r(\text{life}(A)) \rightarrow M_r(\Delta) \text{ such that } [\mathbb{I}_r(\text{life}(A))] \rightarrow [A] \text{ for all }\text{life}(A) \text{ containing }r.$$
We will first prove that $\phi_r$ is well-defined. 
If $r \in \text{life}(A)$ then $A \in M_r(\triangle)$.
We know that $M_r(\triangle)$ is generated by elements $A \in \triangle(r)$ for which $r \in \text{life}(A)$. 
Thus the compositions satisfy $\phi_r \circ \psi_r = \text{id}_r$ and $\psi_r \circ \phi_r = \text{id}_r$.
It remains to prove that morphisms correctly behave under the functors $\psi, \phi$. 
The proofs for both cases are essentially the same, thus we will prove it only for $\psi$. 
The goal is to prove that the following diagram commutes:

\begin{figure}[H]
\centering 
\begin{tikzpicture}[scale=1.0]
  \matrix (m) [matrix of math nodes,row sep=3em,column sep=4em,minimum width=2em]
  {
     M_s(\triangle) & M_t(\triangle) \\
     \bigoplus_{A}\I_s(\text{life}(A)) & \bigoplus_{A}\I_t(\text{life}(A)) \\};
  \path[-stealth]
    (m-1-1) edge node [left] {$\psi_s$} (m-2-1)
            edge [->] node [above] {$m^t_s$} (m-1-2)
    (m-2-1.east|-m-2-2) edge node [above] {$i^t_s$}
            node [above] {} (m-2-2)
      (m-1-2) edge node [right] {$\psi_t$} (m-2-2);
\end{tikzpicture}
\end{figure}

Here $i^t_s$ is the direct sum of the corresponding maps of interval modules $\bigoplus_{A}(i^t_s)^A $ . 
Let $[A]$ be an arbitrary generator of $M_r(\triangle)$. 
There are two possibilities how $m^t_s$ can map $[A]$. 
If $t \in \text{life}(A)$, then $m^t_s([A]) = [A] \in M_t(\triangle)$ and by definition $$\phi_t \circ m^t_s([A]) = [\mathbb{I}_t(\text{life}(A))].$$
Since both $s,t \in \text{life}(A)$, we also have that
$$m^t_s \circ \phi_t([A]) = [\mathbb{I}_t(\text{life}(A))] = \phi_t \circ m^t_s([A]).$$
Assume now that $t \notin \text{life}(A)$. 
Then $m^t_s([A]) = 0$ and thus $\phi_t(m^t_s([A])) = 0$. On the other hand $i^t_s\circ\phi_s([A]) = [\mathbb{I}_t(\text{life}(A))] = \mathbb{Z}_2$.
Since $t \notin \text{life}(A)$, we get $i^t_s\circ\phi_s([A]) = 0$.
\end{proof}


\begin{lem}[merge modules interleaved]
\label{lem:merge_modules_interleaved}
If subsets $A,B$ of a metric space $(X,d)$ have $\HD(A,B)=\de$, then the merge modules $M(\De_{SL}(A))$, $M(\De_{SL}(B))$ are $\de$-interleaved.
\bs
\end{lem}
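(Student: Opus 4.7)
The plan is to construct explicit degree-$\delta$ homomorphisms $\phi\in\text{Hom}^{\delta}(M(\De_{SL}(A)),M(\De_{SL}(B)))$ and $\psi\in\text{Hom}^{\delta}(M(\De_{SL}(B)),M(\De_{SL}(A)))$ and verify the interleaving identities $\psi\circ\phi=1^{2\delta}_{M(\De_{SL}(A))}$ and $\phi\circ\psi=1^{2\delta}_{M(\De_{SL}(B))}$.

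First I would translate the Hausdorff bound into geometric inclusions. Writing $A^s\subseteq X$ for the union of closed $s$-balls around points of $A$, the condition $\HD(A,B)=\delta$ gives the inclusions $A^s\subseteq B^{s+\delta}$ and $B^s\subseteq A^{s+\delta}$ for every scale $s\geq 0$. Identifying each cluster $C\in\De_{SL}(A;s)$ with its corresponding connected component of $A^s$, these inclusions induce set-level correspondences $\sigma_s:\De_{SL}(A;s)\to\De_{SL}(B;s+\delta)$ and $\tau_s:\De_{SL}(B;s)\to\De_{SL}(A;s+\delta)$, natural with respect to the refinement maps $\De_s^t$. Chaining the inclusions $A^s\subseteq B^{s+\delta}\subseteq A^{s+2\delta}$ yields the key round-trip identity $\tau_{s+\delta}\circ\sigma_s=\De_s^{s+2\delta}$ (and symmetrically for $B$), which will encode the required $2\delta$-shift at the module level.

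Next I would use Lemma~\ref{lem:merge_module_decomposition} to write $M(\De_{SL}(A))\cong\bigoplus_C\mathbb{I}(\life(C))$ and $M(\De_{SL}(B))\cong\bigoplus_D\mathbb{I}(\life(D))$ and define the interleaving summand-by-summand. For each merge set $C$ of $A$ with $\life(C)=[b_C,d_C)$, take the candidate target $D_C=\sigma_{b_C}(C)$; since $D_C$ exists as a cluster of $B$ at scale $b_C+\delta$, its birth satisfies $b_{D_C}\leq b_C+\delta$. Define $\phi^C:\mathbb{I}(\life(C))\to\mathbb{I}(\life(D_C))$ to be the canonical nonzero degree-$\delta$ interval-module homomorphism (identity on scales where the shifted source interval meets the target, zero elsewhere) whenever such a nontrivial map exists, which requires the additional inequality $d_{D_C}\leq d_C+\delta$; otherwise set $\phi^C=0$. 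Summing yields $\phi=\bigoplus_C\phi^C\in\text{Hom}^{\delta}(M(\De_{SL}(A)),M(\De_{SL}(B)))$, and $\psi$ is defined symmetrically from $\tau$.

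The verification of $\psi\phi=1^{2\delta}$ splits by summand. When $\phi^C\neq 0$, the round-trip identity forces $\tau_{b_{D_C}}(D_C)=C$, so the composition $\psi^{D_C}\circ\phi^C$ stays within the $C$-summand and acts precisely as the $2\delta$-shift, matching $1^{2\delta}$ restricted there. When $\phi^C=0$ (the failure case $d_{D_C}>d_C+\delta$) I would show that $1^{2\delta}$ also vanishes on $\mathbb{I}(\life(C))$, which is equivalent to the lifespan bound $d_C-b_C\leq 2\delta$: the geometric content is that if $D_C$ survives past $d_C+\delta$ in $B$ while $C$ persists for more than $2\delta$ in $A$, then at the merger $s=d_C$ in $A$ the $\sigma$-images of $C$ and its merging sibling must already coincide in $B$, and propagating this collision through the inclusion $B^s\subseteq A^{s+\delta}$ contradicts the genuine merger in $A$. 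The calculation for $\phi\psi$ is symmetric.

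The main obstacle is precisely this lifespan estimate $d_C-b_C\leq 2\delta$ in the vanishing case. The difficulty is that the hypothesis $d_{D_C}>d_C+\delta$ is a statement about $B$'s merge structure while the conclusion is a statement about $A$'s, so bridging the two requires simultaneously using both Hausdorff inclusions together with the chain-connectivity definition of single-linkage clusters, and then pinning down which summand of $M(\De_{SL}(B))$ actually receives the sibling's mass. Once this geometric estimate is established, the remaining verifications become routine consequences of the interval-module decomposition and the round-trip identity.
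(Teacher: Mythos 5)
Your route differs from the paper's: the paper defines the degree-$\delta$ maps directly on generators of the merge modules, setting $\phi_r([U])=[U]$ exactly when the \emph{same} subset $U$ occurs as a cluster of the other dendrogram and is alive at the shifted scale $r+\delta$ (and $0$ otherwise), and then verifies the two interleaving diagrams by a case analysis on whether $t$ and $t+\delta$ lie in the relevant lifespans; it never passes through the interval decomposition or through a cluster-to-containing-cluster correspondence. You instead transport clusters along the inclusions of sublevel sets via the maps $\sigma_s,\tau_s$ and define the homomorphisms summand-by-summand on the decomposition of Lemma~\ref{lem:merge_module_decomposition}. That is a legitimate and arguably more geometric strategy, but as written it is a plan rather than a proof, and the plan has two unresolved points.

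First, the one you flag yourself: when $\phi^C=0$ you must prove $d_C-b_C\leq 2\delta$, so that $1^{2\delta}$ also annihilates the summand $\I(\life(C))$. This estimate is the entire content of the lemma --- without it the identities $\psi\circ\phi=1^{2\delta}$ and $\phi\circ\psi=1^{2\delta}$ are simply not verified --- and the paragraph you devote to it describes what would have to be true rather than arguing it. Second, in the non-vanishing case the assertion that ``the round-trip identity forces $\tau_{b_{D_C}}(D_C)=C$'' does not follow from the identity you established: that identity gives $\tau_{b_C+\delta}(\sigma_{b_C}(C))=\De_{b_C}^{b_C+2\delta}(C)$, which equals $C$ only if $C$ is still alive at scale $b_C+2\delta$, and in addition you apply $\tau$ at the birth scale $b_{D_C}$ of $D_C$, which can be strictly smaller than $b_C+\delta$. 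If $C$ has died by the relevant scale, the composite $\psi^{D_C}\circ\phi^C$ lands in the summand of a strictly larger merge set $C'\supsetneq C$; since $1^{2\delta}$ is diagonal with respect to the decomposition of Lemma~\ref{lem:merge_module_decomposition}, you must also show that every such off-diagonal component vanishes. Both missing steps are estimates of the same flavour, requiring the two Hausdorff inclusions to be used simultaneously together with the chain definition of single-linkage clusters; until they are written out, the lemma is not proved.
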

\begin{proof}[Proof of Lemma~\ref{lem:merge_modules_interleaved}]
The equality $\HD(A,B)=\de$ means that $A$ is covered by the union of closed balls that have the radius $\de$ and centers at all points $b\in B$.
This union is the preimage is $d_B^{-1}([0,\de])$, i.e. $A\subset d_B^{-1}([0,\de])$.
Extending the distance values by $s\geq 0$, we get $d_A^{-1}([0,s])\subset d_B^{-1}([0,s+\de])$ and similarly $d_B^{-1}([0,s])\subset d_A^{-1}([0,s+\de])$.
\medskip

Let $U$ be an arbitrary set in $\De_{SL}(A)$.
Define map $\phi_r:M(A; r) \rightarrow M(B; r+\delta)$ 
$$\phi_r([U]) = \left\{ \begin{array}{ll} 
[U], & \mbox{ if } r+\delta\in \text{life}(\De_{SL}(B),U), \\
0, & \mbox{ otherwise }.
\end{array} \right.\qquad $$

Symmetrically for any $V \in \De_{SL}(B)$ we define $\psi_r:M(B; r) \rightarrow M(A; r+\delta) $ 

$$\psi_r([V]) = \left\{ \begin{array}{ll} 
[V], & \mbox{ if } r+\delta\in \text{life}(\De_{SL}(A),V), \\
0, & \mbox{ otherwise }.
\end{array} \right.\qquad $$
In the notation above, $\text{life}(\De_{SL}(B),U)$ is the $\text{life}(U)$ in the dendrogram $\De_{SL}(B)$. 
If $U \notin \De_{SL}(B)(t)$ for all values $t$, then $\text{life}(U) = \emptyset$.
By symmetry it is enough to prove that the following diagrams commute:
\begin{figure}[H]
\centering
\begin{tikzpicture}[scale=1.2]
  \matrix (m) [matrix of math nodes,row sep=3em,column sep=4em,minimum width=2em]
  {
     M_s(\De_{SL}(A)) & M_t(\De_{SL}(A)) \\
     M_{s+\delta}(\De_{SL}(B)) & M_{t+\delta}(\De_{SL}(B)) \\};
  \path[-stealth]
    (m-1-1) edge node [left] {$\phi_s$} (m-2-1)
            edge [->] node [above] {$m^t_s$} (m-1-2)
    (m-2-1.east|-m-2-2) edge node [above] {$m^{t+\delta}_{s+\delta}$}
            node [above] {} (m-2-2)
      (m-1-2) edge node [right] {$\phi_t$} (m-2-2);
\end{tikzpicture}
\medskip

\begin{tikzcd}
 & M_s(\De_{SL}(B)) \arrow["\psi_s"]{dr}{} \\
M_{s-\delta}(\De_{SL}(A)) \arrow["\phi_{s-\delta}"]{ur}{} \arrow["m^{s+\delta}_{s-\delta}"]{rr}{} && M_{s+\delta}(\De_{SL}(A))
\end{tikzcd}
\end{figure}

We note first that if $[a,b) = (\text{life}(\De_{SL}(A),U)$, then $(\text{life}(\De_{SL}(B),U) \subseteq [a-\epsilon, b+\epsilon)$
\medskip

We begin by proving commutativity of the first diagram. Let $U$ be arbitrary element of $\De_{SL}(A)(s)$. If $t \notin \text{life}(\De_{SL}(A),U)$ then $\phi_t \circ m^t_s = 0$. If $s+\delta \notin \text{life}(\De_{SL}(B),U)$ or $t+\delta \notin \text{life}(\De_{SL}(B),U)$ then we are done. 
Since $t \notin \text{life}(\De_{SL}(A),U)$, it follows that $t+\delta \notin \text{life}(\De_{SL}(B),U)$.
And thus with given assumptions the diagram commutes.
\medskip

Assume now that $t+\delta \notin \text{life}(\De_{SL}(A),U) $. Then both $\phi_t(m^t_s(U)) = 0 = m^{t+\delta}_{s+\delta}(\phi_s(U))$. 
In the last case we assume that $t \in \text{life}(\De_{SL}(A),U)$ and $t+\delta \in \text{life}(\De_{SL}(B),U)$. 
In this case obviously $s+\delta \in \text{life}(\De_{SL}(B),U)$ and thus $\phi_t(m^t_s([U])) = [U] = m^{t+\delta}_{s+\delta}(\phi_s([U]))$.
\medskip

For the second diagram assume now that $U \in M(\De_{SL}(A))(s-\delta) $. 
Assume first that $s \notin \text{life}(\De_{SL}(B),U)$, then $s+\delta \notin \text{life}(\De_{SL}(B),U)$ and $m^{s+\delta}_{s-\delta}([U]) = 0 = \psi_s(\phi_{s-\delta}[U])$.
\medskip

Assume then that $s \in \text{life}(\De_{SL}(B),U)$. Now outcome of both maps $\psi_s$ and $m^{s+\delta}_{s-\delta}$ depend on if $s+\delta \in \text{life}(\De_{SL}(A),U)$ and thus $m^{s+\delta}_{s-\delta}([U]) = \psi_s(\phi_{s-\delta}[U])$.
Since all the diagrams commute, the required conclusion follows.
\end{proof}

\begin{thm}[stability of a mergegram]
\label{thm:stability_mergegram}
The mergegrams of any finite point clouds $A,B$ in a metric space $(X,d)$ satisfy $\BD(\MG(\De_{SL}(A)),\MG(\De_{SL}(B))\leq \HD(A,B)$.
Hence any small perturbation of a cloud $A$ in the Hausdorff distance yields a similarly small perturbation in the bottleneck distance for its mergegram $\MG(\De_{SL}(A))$.
\bs
\end{thm}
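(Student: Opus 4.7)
The plan is to derive the stability of the mergegram as a short corollary of the three facts already assembled: Lemma~\ref{lem:merge_module_decomposition}, Lemma~\ref{lem:merge_modules_interleaved}, and the isometry theorem (Theorem~\ref{thm:stability_persistence}). All the substantive work has been done in those lemmas; the final argument is essentially a chain of identifications.

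First I would identify the mergegram with the persistence diagram of the merge module. By Lemma~\ref{lem:merge_module_decomposition}, $M(\De_{SL}(A))\cong\bigoplus_{B}\I(\life(B))$, where $B$ ranges over the merge sets of $\De_{SL}(A)$. Reading off Definition~\ref{dfn:persistence_diagram}, the persistence diagram $\PD(M(\De_{SL}(A)))$ is then the multiset of endpoints $(\birth(B),\death(B))$, which by Definition~\ref{dfn:mergegram} is exactly $\MG(\De_{SL}(A))$. The same identification holds for the cloud $B$. Since every merge set satisfies $\birth<\death$, no diagonal points are discarded, so the two multisets agree on the nose.

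Second, I would verify the hypotheses of the isometry theorem and chain in the interleaving bound. Since $A$ and $B$ are finite, each single-linkage dendrogram has finitely many merge sets, so both merge modules decompose into finitely many interval summands and are in particular q-tame. Setting $\de=\HD(A,B)$, Lemma~\ref{lem:merge_modules_interleaved} gives a $\de$-interleaving between $M(\De_{SL}(A))$ and $M(\De_{SL}(B))$, so Definition~\ref{dfn:interleaving_distance} yields $\ID(M(\De_{SL}(A)),M(\De_{SL}(B)))\leq\HD(A,B)$. Theorem~\ref{thm:stability_persistence} now equates the interleaving distance with the bottleneck distance between the persistence diagrams, and combining with step one gives
\begin{equation*}
\BD(\MG(\De_{SL}(A)),\MG(\De_{SL}(B)))=\ID(M(\De_{SL}(A)),M(\De_{SL}(B)))\leq\HD(A,B),
\end{equation*}
which is the required inequality.

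There is essentially no obstacle at this final stage; the heavy lifting sits in the two new lemmas. The only care needed is to confirm that the bijective correspondence between pairs in $\MG(\De_{SL}(A))$ and interval summands of $M(\De_{SL}(A))$ respects multiplicities, i.e.\ if several merge sets share the same life interval then each contributes an independent $\I(\life)$-summand with its own generator, matching the multiplicity convention of Definition~\ref{dfn:mergegram}. Once this bookkeeping is in place, the statement follows in three lines.
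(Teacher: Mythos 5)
Your proposal is correct and follows essentially the same route as the paper's own proof: Lemma~\ref{lem:merge_modules_interleaved} gives the $\de$-interleaving, Lemma~\ref{lem:merge_module_decomposition} gives finite decomposability (hence $q$-tameness) and the identification of the mergegram with the persistence diagram of the merge module, and Theorem~\ref{thm:stability_persistence} converts the interleaving bound into the bottleneck bound. Your added remark on multiplicity bookkeeping is a useful clarification but does not change the argument.
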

\begin{proof}
The given clouds $A,B\subset X$ with $\HD(A,B)=\de$ have $\de$-interleaved merge modules by Lemma~\ref{lem:merge_modules_interleaved}, so $\ID(\MG(\De_{SL}(A)),\MG(\De_{SL}(B))\leq\de$.
Since any merge module $M(\De)$ is finitely decomposable, hence is $q$-tame by Lemma~\ref{lem:merge_module_decomposition}. 
The corresponding mergegram $\MG(M(\De))$ satisfies Theorem~\ref{thm:stability_persistence}, so
$\BD(\MG(\De_{SL}(A)),\MG(\De_{SL}(B))\leq\de$.
\end{proof}

\begin{figure}[h!]
\centering
\includegraphics[width=0.49\linewidth]{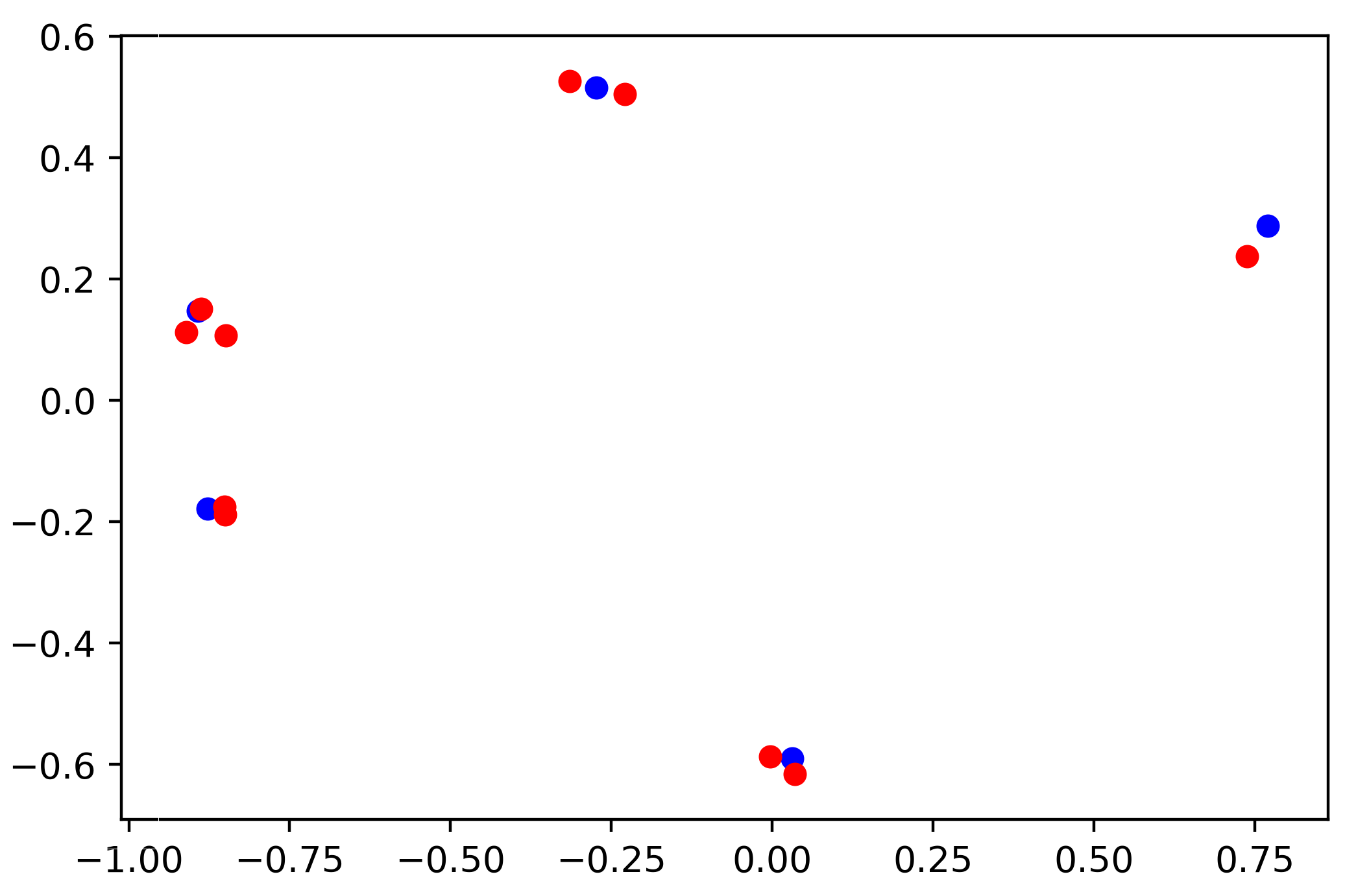}
\includegraphics[width=0.49\linewidth]{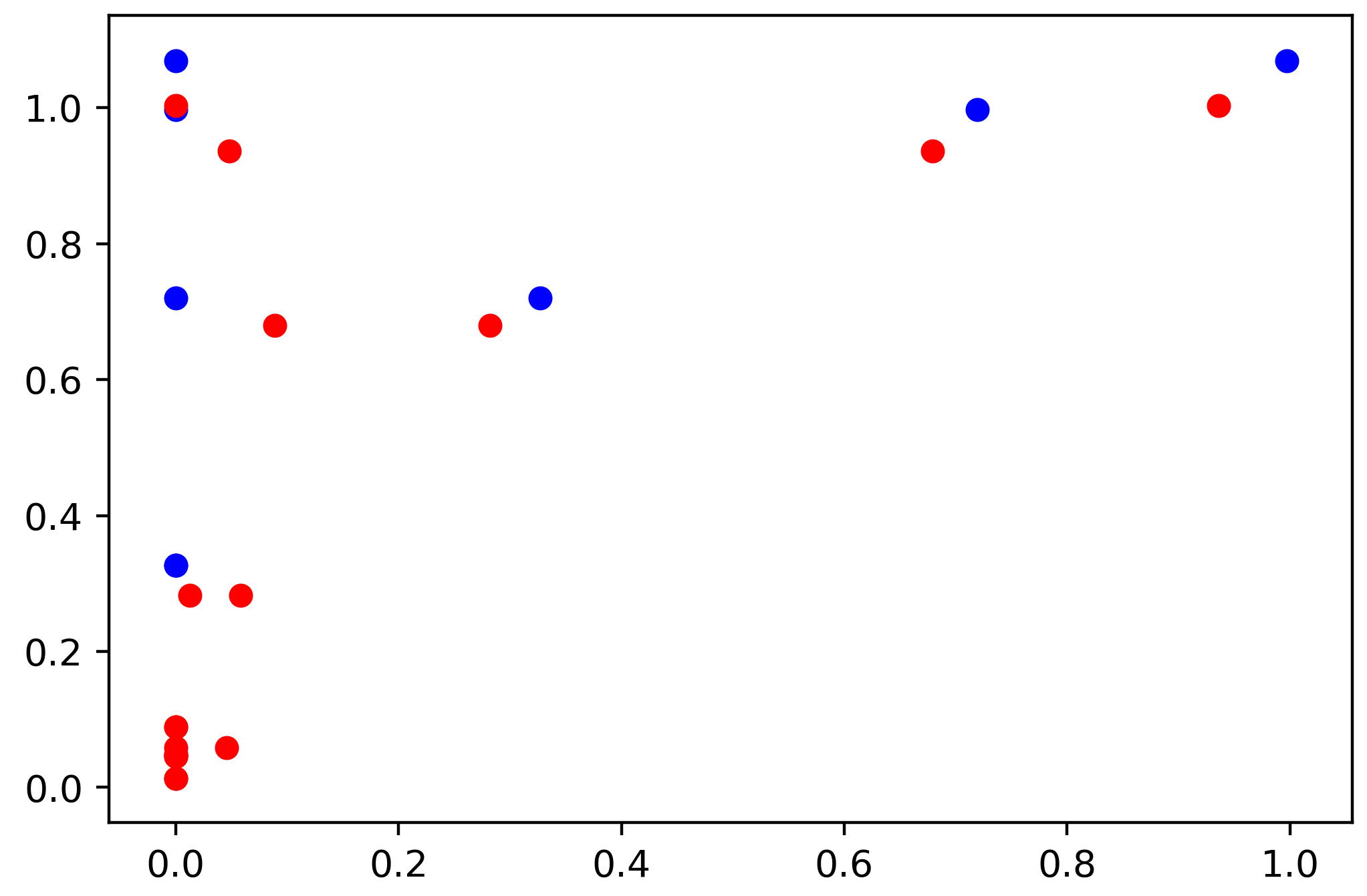}
\caption{\textbf{Left}: the cloud $C$ of 5 blue points is close to the cloud $C'$ of 10 red points in the Hausdorff distance. \textbf{Right}: the mergegrams are close in the bottleneck distance as predicted by Theorem~\ref{thm:stability_mergegram}.}
    \label{fig:perturbed_mergegram}
\end{figure}

Fig.~\ref{fig:perturbed_mergegram} illustrates Theorem~\ref{thm:stability_mergegram} on a cloud and its perturbation by showing their close mergegrams.
The more extensive experiment on 100 clouds  in \cite[Fig.~8]{elkin2020mergegram} similarly confirms that the mergegram is perturbed within expected bounds.  
The computational complexity of the mergegram $\MG(\De_{SL}(A))$ was proved to be near linear in the number $n$ of points in a cloud $A\subset\R^m$, see \cite[Theorem~8.2]{elkin2020mergegram}.
The results above justify that the invariant mergegram satisfies conditions (a,b,c) of Isometry Recognition Problem~\ref{pro:recognition}.

\section{New experiments on isometry recognition of substantially distorted real shapes}
\label{sec:experiments}

This section fulfills final condition (d) of Problem~\ref{pro:recognition} by experimentally comparing the mergegram with 0D persistence and distributions of distances to neighbors on 15000 clouds.
The earlier paper \cite{elkin2020mergegram} did experiments only on randomly generated clouds.
\medskip

We considered 15 classes of shapes represented by black and white images of mythical creatures \cite{bronstein2008analysis}, see Fig.~\ref{fig:myth_images}.
These shapes were chosen to make the shape recognition problem really challenging.
Indeed, similar creatures from this dataset are represented by slightly different shapes, which can be hard to isometrically distinguish from each other.
For example, several images of a horse include only minor differentiating features such as a saddle or a different tails, which makes horses nearly identical.
\medskip

\textbf{Shape generation}.
For each image, we generated 1000 perturbed images by affine and projective transformations to get 15000 distorted shapes split into 15 classes.
\medskip

First we rotated each image around its central point by an angle generated uniformly in the interval $[0,2\pi)$ using the function cv::rotate from the OpenCV library. 
If needed, we extended the resulting image to fit all black pixels of the rotated shape into a bounding box.
Then both affine and projective transformations distort each image by using a noise parameter $\de$ such that the value $\de=0$ represents the identity transformation.
\medskip

Fig.~\ref{fig:distorted_shapes_corner_points} illustrates how an original image is randomly rotated, then randomly distorted by affine or projective transformations depending on the noise parameter $\de$.
\medskip

\begin{figure}


\footnotesize
\centering
\input figures/distorted_shapes_corner_points.txt
\caption{Generating distorted shapes by applying random rotations, affine and projective transformations, which substantially affect the extracted clouds of Harris corner points \cite{sanchez2018analysis} in red.}
\label{fig:distorted_shapes_corner_points}
\end{figure}

\textbf{Affine transformations} are implemented as compositions of the already applied rotations above and the function cv::resize() from the OpenCV library.
This function scales an image of size $w\times h$ by horizontal and vertical factors $a,b$ sampled as follows.
\medskip

\noindent
$\bullet$
\textbf{Uniform noise}:
$a \in [1-\delta w, 1+\delta w]$, 
$b \in [1-\delta h, 1+\delta h]$ have uniform distributions.
\medskip

\noindent
$\bullet$
\textbf{Gaussian noise}: 
$a \in \mathcal{N}(1,\delta h) \cap \R_{+}$ and 
$b \in \mathcal{N}(1,\delta w) \cap \R_{+}$
have Gaussian distributions with mean 1 and standard variance $\de h, \de v$, truncated to positive numbers.
\medskip

\textbf{Projective transformation} are implemented as compositions of the already applied rotations above and the OpenCV function  cv::getPerspectiveTransform() function, which is parametrized by 4-dimensional array $v = (a_0, a_1, a_2, a_3)$ consisting of points $a_i\in\Z^2$, $i=0,1,2,3$.
This function maps the corners of the image as follows: $$(0,0)\mapsto a_0, (0,h) \mapsto a_1, (w, 0) \mapsto a_2  \text{ and } (w, h) \mapsto a_3.$$
Then the projective transformation of the rectangle $w\times h$ is uniquely determined by the above corners.
The above points $a_i$ are randomly sampled by using a noise parameter $\de$.
\medskip

\noindent
$\bullet$
\textbf{Uniform noise}: each coordinate has a uniform distribution with a noise parameter $\de$  
$$a_0 \in [0,\delta w] \times [0,\delta h],\qquad 
a_1 \in [0,\delta w] \times [h - \delta h, h],$$ 
$$a_2 \in [w - \delta w, w] \times [0,\delta h],\qquad 
a_3 \in [w - \delta w, w] \times [h - \delta h, h].$$

\noindent
$\bullet$
\textbf{Gaussian noise}: each coordinate has a Gaussian distribution truncated to the image
$$a_0 \in (\mathcal{N}(0,\delta w)\cap [0,w]) \times (\mathcal{N}(0,\delta h)\cap [0,h]),$$
$$a_1 \in (\mathcal{N}(0,\delta w)\cap [0,w]) \times (\mathcal{N}(h,\delta h)\cap [0,h]),$$ 
$$a_2 \in (\mathcal{N}(w,\delta w)\cap [0,w])\times(\mathcal{N}(0,\delta h)\cap [0,h]),$$
$$a_3 \in (\mathcal{N}(w,\delta w)\cap [0,w])\times(\mathcal{N}(h,\delta h)\cap [0,h]).$$

\textbf{Point cloud extraction}. For each distorted image, we extract classical Harris point corners \cite{sanchez2018analysis} due to their simplicity, see the red points in Fig.~\ref{fig:distorted_shapes_corner_points}.
For detecting corner points, the OpenCV function cv::cornerHarris was used with the parameters blockSize = 3, apertureSize = 5, k = 0.04, thresh = 120.
However one can use any reliable algorithms such as FAST \cite{rosten2008faster} or scale-invariant feature transform (SIFT) \cite{lowe1999object}. 
\medskip

After describing the available point cloud data above, we specify condition~(\ref{pro:recognition}d) of Isometry Recognition Problem~\ref{pro:recognition} in the context of supervised machine learning.

\begin{pro}[experimental recognition]
\label{pro:experimental}
Given a labeled dataset split into classes of similar but projectively distorted shapes, develop a supervised learning tool to recognize a class of distorted shapes with a high accuracy despite substantial noise.  
\end{pro}

Since all isometry invariants are independent of point ordering, the most suitable neural network is PersLay \cite{carriere2019perslay} whose output is invariant under permutations by design. 
Each layer is a combination of a coefficient layer $\omega(p):\mathbb{R}^m \rightarrow \mathbb{R}$, a transformation $\phi(p):\mathbb{R}^m \rightarrow \mathbb{R}^q$ and a permutation invariant layer $\text{op}$ combined as follows
$$\text{PersLay}(D) = \text{op}(\{w(p)\phi(p)\}_{p\in D}), \text{ where } D \text{ is a diagram or multiset of points in }\R^m.$$

Coordinates of all input points are linearly normalized to $[0,1]$.
We have used the following parameters of the PersLay network for all experiments below.
\medskip

\noindent    
\textbf{The max layer} $\text{MAX}(q)$ consists of the following functions.
    \begin{itemize}
        \item The coefficient layer $w : \mathbb{R}^m \rightarrow \mathbb{R}$ is the weight $w(x_1,\dots,x_m) = k|x_1-x_2|$, where $k$ is a trainable scalar and the dimension is typically $m=2$. 
        \item The transformation layer $\phi: \{\text{diagrams of points in }\mathbb{R}^m\} \rightarrow \mathbb{R}^q$ is the function $\phi(D) = \sum_{p \in D}\lambda p + \gamma\text{maxpool}(D) + \beta$, where $\lambda$,$\gamma$ are $\mathbb{R}^{m\times q}$ trainable matrices, $\beta$ is a $\mathbb{R}^q$ trainable vector and $\text{maxpool}$ returns a maximum value for every $i=1,\dots,m$.
\item The operational layer $\text{op}:\mathbb{R}^q \rightarrow \mathbb{R}^t$ puts all coordinates in increasing order and composes the result with standard densely connected layer \cite{tensorflow2015-whitepaper} $\text{Dense}: \mathbb{R}^q \rightarrow \mathbb{R}^t$.
    \end{itemize}

An output is a vector in $\R^t$ for $t=15$ of image classes.
A final prediction is obtained by choosing a class with a largest coordinate in the output vector.
\medskip

\noindent    
\textbf{The image layer} $\text{Im}[x,y]$ for integer parameters $x,y$ and a multiset of points in the unit square $[0,1]^2$ consists of the following functions.    
\begin{itemize}
        \item The coefficient layer $w:\R^2 \rightarrow \R$ is a piecewise constant function trained on $x\cdot y$ parameters, defined on the unit square partition 
        $$\mathcal{P}(x,y) = \left\{ \left[\frac{i}{x}, \frac{i+1}{x}\right] \times  \left[\frac{j}{y}, \frac{j+1}{y}\right]  \mid i = 0,\dots,x-1 \text{ and } j =0,\dots,y-1 \right\}. $$
\item Let $\phi_{p}:\mathbb{R}^2 \rightarrow \mathbb{R}$ be the Gaussian distribution centered at $p\in D$ with a trainable standard deviation $\sigma$. The transformation layer $\phi:\R \rightarrow \R^{xy}$ consists of $xy$ functions $\phi_p$, where $p$ runs over all centroids of the partition $\mathcal{P}(x,y)$.
 \item The operation layer op takes the sum over the given point cloud. A final prediction is made by composing the operation layer with the Dense layer.
    \end{itemize}
\medskip

Finally, the PersLay network used the optimizer tf.keras.adam with the standard learning rate 0.01 and 150 epochs, the loss function SparseCategoricalCrossEntropy, the 80:20 of training and testing,
 a 5-fold Monte Carlo cross validation for each run.
\medskip

Fig.~\ref{fig:max-layer_affine}, \ref{fig:max-layer_projective}, \ref{fig:image-layer_affine} show that the mergegram $\MG$ consistently outperforms two other isometry invariants: 0D persistence and the multiset $NN(4)$ consisting of 4-tuple distances to neighbors per given point. 
The simpler multiset $NN(2)$ performed worse.
A given cloud $C\subset\R^2$ was considered as a baseline input.
The noise factors $\de$ reached 25\%, which means that original images were distorted up to a quarter of image sizes. 

\section{A discussion of novel contributions and further open problems}
\label{sec:discussion}

This paper has further demonstrated that the provably stable-under-noise invariant mergegram of a dendrogram is a fast and efficient tool in the challenging problem of isometry shape recognition, especially for substantially distorted images.
\medskip

In comparison with the conference version \cite{elkin2020mergegram}, section~\ref{sec:relations} proved new Theorem~\ref{thm:mergegram-to-dendrogram} describing how to reconstruct a single-linkage dendrogram in general position from its much simpler mergegram.   
It is hard to define a continuous metric between dendrograms, especially because they can be unstable under perturbations.
Theorem~\ref{thm:mergegram-to-dendrogram} allows us to measure a continuous similarity between dendrograms in general position as the bottleneck distance between their unique mergegrams.
This distance can be computed in time $O(n^{1.5}\log n)$ \cite{kerber2016geometry} for diagrams consisting of at most $n$ points.
\medskip

Section~\ref{sec:stability} provided a full proof of stability of the mergegram under perturbations of points, while the earlier paper \cite{elkin2020mergegram} only announced this result without proving highly non-trivial Lemmas~\ref{lem:merge_module_decomposition} and~\ref{lem:merge_modules_interleaved}, which required a heavy algebraic machinery.
\medskip

\begin{figure}[h!]
\centering
\begin{tikzpicture}[thick,scale=0.75, every node/.style={transform shape}]
\begin{axis}[xlabel = bound $\de$ of uniform noise relative to image size, ylabel = recognition rate, grid, grid style={gray},legend style={at={(0.5,-0.1)},anchor=north}]
\addplot[teal, mark=*] table [x=e, y=m, col sep=comma] {Tables/ProjectiveUniform2.csv};
\addplot table [x=e, y=p, col sep=comma] {Tables/ProjectiveUniform2.csv};
\addplot table [x=e, y=nn4, col sep=comma] {Tables/ProjectiveUniform.csv};
\addplot[blue,mark=*] table [x=e, y=c, col sep=comma] {Tables/ProjectiveUniform2.csv};
\end{axis}
\end{tikzpicture}
\begin{tikzpicture}[thick,scale=0.75, every node/.style={transform shape}]
\begin{axis}[xlabel = standard deviation $\de$ of Gaussian noise, grid,grid style={gray},legend style={at={(0.5,0.5)},anchor=center}]
\addplot[teal, mark=*] table [x=e, y=m, col sep=comma] {Tables/ScaleGaussian.csv};
\addlegendentry{mergegram}
\addplot table [x=e, y=p, col sep=comma] {Tables/ScaleGaussian.csv};
\addlegendentry{0D persistence}
\addplot table [x=e, y=nn4, col sep=comma] {Tables/ScaleGaussian.csv};
\addlegendentry{NN(4) distances}
\addplot[blue,mark=*] table [x=e, y=c, col sep=comma] {Tables/ScaleGaussian.csv};
\addlegendentry{point cloud}
\end{axis}
\end{tikzpicture}
\caption{Recognition rates are obtained by training the max layer MAX(75) of PersLay on three isometry invariants and a cloud of corner points extracted from 15000 affinely distorted images.}
\label{fig:max-layer_affine}
\end{figure}


\begin{figure}[h!]
\centering
\begin{tikzpicture}[thick,scale=0.75, every node/.style={transform shape}]
\begin{axis}[xlabel = upper bound $\de$ of uniform noise, ylabel = recognition rate,grid,grid style={gray},legend style={at={(0.5,-0.1)},anchor=north}]
\addplot[teal, mark=*] table [x=e, y=m, col sep=comma] {Tables/ProjectiveUniform2.csv};
\addplot table [x=e, y=p, col sep=comma] {Tables/ProjectiveUniform2.csv};
\addplot table [x=e, y=nn4, col sep=comma] {Tables/ProjectiveUniform.csv};
\addplot[blue,mark=*] table [x=e, y=c, col sep=comma] {Tables/ProjectiveUniform2.csv};
\end{axis}
\end{tikzpicture}
\begin{tikzpicture}[thick,scale=0.75, every node/.style={transform shape}]
\begin{axis}[xlabel = standard deviation $\de$ of Gaussian noise, grid, grid style={gray},legend style={at={(0.25,0.32)},anchor=north}]
\addplot[teal, mark=*] table [x=e, y=m, col sep=comma] {Tables/ProjectiveGaussian.csv};
\addlegendentry{mergegram}
\addplot table [x=e, y=p, col sep=comma] {Tables/ProjectiveGaussian.csv};
\addlegendentry{0D persistence}
\addplot table [x=e, y=nn4, col sep=comma] {Tables/ProjectiveGaussian.csv};
\addlegendentry{NN(4) distances}
\addplot[blue,mark=*] table [x=e, y=c, col sep=comma] {Tables/ProjectiveGaussian.csv};
\addlegendentry{point cloud}
\end{axis}
\end{tikzpicture}
\caption{Recognition rates are obtained by training the max layer MAX(75) of PersLay on isometry invariants and corner points extracted from 15000 projectively distorted images.}
\label{fig:max-layer_projective}
\end{figure}



\begin{figure}[h!]
\centering
\begin{tikzpicture}[thick,scale=0.75, every node/.style={transform shape}]
\begin{axis}[xlabel = upper bound $\de$ of uniform noise, ylabel = recognition rate, grid, grid style={gray},legend style={at={(0.5,-0.1)},anchor=north}]
\addplot[teal, mark=*] table [x=e, y=m, col sep=comma] {Tables/RotateScaleImage.csv};
\addplot table [x=e, y=p, col sep=comma] {Tables/RotateScaleImage.csv};
\addplot table [x=e, y=n, col sep=comma] {Tables/RotateScaleImage.csv};
\addplot[blue,mark=*] table [x=e, y=c, col sep=comma] {Tables/RotateScaleImage.csv};
\end{axis}
\end{tikzpicture}
\begin{tikzpicture}[thick,scale=0.75, every node/.style={transform shape}]
\begin{axis}[xlabel = standard deviation $\de$ of Gaussian noise, grid, grid style={gray} ,legend style={at={(0.75,0.45)},anchor=center}]
\addplot[teal, mark=*] table [x=e, y=m, col sep=comma] {Tables/ScaleGaussianImage.csv};
\addlegendentry{mergegram}
\addplot table [x=e, y=p, col sep=comma] {Tables/ScaleGaussianImage.csv};
\addlegendentry{0D persistence}
\addplot table [x=e, y=n, col sep=comma] {Tables/ScaleGaussianImage.csv};
\addlegendentry{NN(4) distances}
\addplot[blue,mark=*] table [x=e, y=c, col sep=comma] {Tables/ScaleGaussianImage.csv};
\addlegendentry{point cloud}
\end{axis}
\end{tikzpicture}
\caption{Recognition rates are obtained by training the image layer IM[20,20] of PersLay on isometry invariants and a cloud of corner points extracted from 15000 affinely distorted images.}
\label{fig:image-layer_affine}
\end{figure}

Example~\ref{exa:5-point_line} and the discussion following Theorem~\ref{thm:mergegram_to_0D_persistence} justify that the invariant mergegram is strictly stronger than the 0D persistence.
This theoretical fact is now confirmed by the new experiments on 15000 point clouds extracted from substantially distorted real shapes.
In Fig.~\ref{fig:max-layer_affine}, \ref{fig:max-layer_projective}, \ref{fig:image-layer_affine} the mergegram outperformed other isometry invariants.
Since the distribution $NN(2)$ of distances to two closest neighbors per point performed badly, we have strengthened this invariant to $NN(4)$ of distances to four nearest neighbors.
However, even $NN(4)$ performed always always worse than the original point cloud, which can not be considered as an isometry invariant.
\medskip

For very high levels of 20\% and 25\% distortions in projective transformations, the PersLay network trained on a point cloud achieved high recognition rates, because we have extensively tried many parameters in the layers MAX(75) and Im[20,20] for a best trade-off between accuracy and speed.
The C++ code for the mergegram is at \cite{elkin2020mergegram}.
\medskip

We thank all reviewers in advance for their valuable time and helpful suggestions.   

\bibliographystyle{plainurl}
\bibliography{recognition-mergegram}

\end{document}